\numberwithin{equation}{section}
\newcommand{\R}{\mathbb{R}}
\DeclareMathOperator{\sn}{sn}
\DeclareMathOperator{\cn}{cn}
\DeclareMathOperator{\dn}{dn}
\newtheorem{remark}{Remark}
\newtheorem{proposition}{Proposition}
\newtheorem{theorem}{Theorem}
\newtheorem{corollary}{Corollary}
\newtheorem{lemma}{Lemma}
\begin{document}
	
\title[Dark breathers in the defocusing mKdV equation]{\bf Dark breathers on a snoidal wave background \\ in the defocusing mKdV equation}
	
\author{Ana Mucalica}
\address[A. Mucalica]{Department of Mathematics and Statistics, McMaster University,	Hamilton, Ontario, Canada, L8S 4K1}
\email{mucalica@mcmaster.ca} 
	
\author{Dmitry E. Pelinovsky}
\address[D. Pelinovsky]{Department of Mathematics and Statistics, McMaster University, Hamilton, Ontario, Canada, L8S 4K1}
\email{dmpeli@math.mcmaster.ca}

\begin{abstract}
We present a new exact solution to the defocusing modified Korteweg-de Vries equation to describe the interaction of a dark soliton and a traveling periodic wave. The solution (which we refer to as to the dark breather) is obtained by using the Darboux transformation with the eigenfunctions of the Lax system expressed in terms of the Jacobi theta functions. Properties of elliptic functions including the quarter-period translations in the complex plane are applied to transform the solution to the simplest form. We explore the characteristic properties of these dark breathers and show that they propagate faster than the periodic wave (in the same direction) and attain maximal localization at a specific parameter value which is explicitly computed.
\end{abstract}   
     
\keywords{The defocusing modified Korteweg--de Vries equation; Traveling periodic waves; Soliton interactions; Dark breathers.}

\maketitle


\section{Introduction}

The main model for this work is the defocusing modified Korteweg--de Vries (mKdV) equation written in the normalized form:
\begin{equation}
\label{1}
u_t-6u^2u_x+u_{xxx}=0,
\end{equation}
where $u = u(x,t)$ is a real-valued function of two real-valued variables $(x,t)$. 
The defocusing mKdV equation (\ref{1}) is a canonical model which can be used to describe nonlinear phenomena in the physics of fluids and crystals, e.g. the propagation of internal waves \cite{Grimshaw,Pelinovsky}, meandering ocean currents \cite{Ralph}, or long waves in the chain of particles \cite{Vainshtein}. 

The purpose of this work is to obtain a new exact solution to the mKdV equation (\ref{1}) which describes the periodic interaction of a dark soliton and a traveling periodic wave. Due to periodicity of such interactions, we cast this solution as {\em the dark breather}. 

Breathers represent spatially localized, time-periodic wave patterns that persist in the nonlinear dynamics. They generalize solitons by incorporating an additional time scale associated with internal oscillations, and have been widely known in the context of integrable systems. Breathers of the KdV equations were studied in \cite{Bertola,HMP} after much earlier works \cite{Gesztesy,Kuznetsov}. Two families of bright (elevation) and dark (depression) breathers were constructed and compared with numerical and laboratory experiments of interactions between solitary waves and dispersive shock waves \cite{Cole,Mao}. Similar bright and dark breathers were obtained 
for another model of the Benjamin--Ono equation \cite{ChenPelin-BO}. 
Breathers of the focusing and defocusing NLS (nonlinear Schr\"{o}dinger) equations were obtained respectively in \cite{Feng} and \cite{Ling} after the previous works in \cite{CPnls,CPW} and \cite{Lou,Shin,Takahashi}. 

Breathers arising as a result of interactions of solitary waves and traveling periodic waves have been observed in various fluids  \cite{Gavr,Maiden,Sprenger,Sande}. For a better comparison with experiments, the breather solutions are needed to be constructed in the simplest form with all parameters  explicitly expressed in terms of the Jacobi elliptic functions. These representations are useful for the study of the physically observable parameters such as the speed, localization width, and the relative shift of dark solitons relative to the periodic background. 

The exact solutions for complicated wave interactions are available for the mKdV equation (\ref{1}) due to its integrability. This is expressed through the existence of the Lax system of two linear equations:
\begin{equation}\label{spectral}
\varphi_x = \begin{pmatrix}
i\zeta & u\\
u & -i\zeta
\end{pmatrix} \varphi 
\end{equation}
and 
\begin{equation}\label{time_evol}
\varphi_t = \begin{pmatrix}
4i\zeta^3+2i\zeta u^2 & 4\zeta^2u-2i\zeta u_x+ 2u^3-u_{xx}\\
4\zeta^2u+2i\zeta u_x+ 2u^3-u_{xx} & -4i\zeta^3-2i\zeta u^2
\end{pmatrix} \varphi, 
\end{equation}
where $\zeta$ is the $(x,t)$-independent spectral parameter and 
$\varphi = (p,q)^T$ is the corresponding eigenfunction. The mKdV equation (\ref{1}) appears as a compatibility condition $\varphi_{xt} = \varphi_{tx}$ of the Lax system (\ref{spectral}) and (\ref{time_evol}), see \cite{Ablowitz,ZS} 
for pioneering works. 

The spectral problem (\ref{spectral}) can be written 
as the classical eigenvalue problem:
\begin{equation}\label{eigen}
(\mathcal{L}-\zeta I)\varphi=0,\;\; \mathcal{L}=\begin{pmatrix}
-i\partial_x & iu\\
-i u & i \partial_x
\end{pmatrix}
\end{equation}
defined by a self-adjoint Dirac operator $\mathcal{L}$ with real $u$ in $L^2(\R)$. Spatially bounded solutions of the eigenvalue problem (\ref{eigen}) exist for admissible values $\zeta$ on the real line $\mathbb{R}$. The set of all admissible values of $\zeta$ is said to be the Lax spectrum associated with the given potential $u$.

Although the mKdV equation is related to the NLS equation because they share the same spectral problem (\ref{eigen}), there are differences between the complex-valued solutions to the NLS equation and the real-valued solutions 
to the mKdV equation. As a result, a general family of the traveling periodic wave solutions to the NLS equation from \cite{Ling} generates only one traveling periodic wave solution to the mKdV equation (\ref{1}):
\begin{equation}\label{sn_potential}
u(x,t)=\phi_0(x+c_0t),\quad \phi_0(x)=k\sn(x;k),\quad c_0=1+k^2,
\end{equation}
where $k \in (0,1)$ is the elliptic modulus. The snoidal solution  (\ref{sn_potential}) is expressed by the Jacobi elliptic function $\sn(x;k)$,  where the elliptic modulus $k\in(0,1)$ parametrizes the family. 
We note that $\phi_0(x) = 0$ as $k \to 0$ and 
$\phi_0(x) = \tanh(x)$ as $k \to 1$, where the latter solution is 
referred to as the kink of the mKdV equation (\ref{1}). The snoidal solution (\ref{sn_potential}) generates a more general family of the periodic solutions of the mKdV equation (\ref{1}) by means of the scaling transformation 
\begin{equation}
\label{scaling-trans}
u(x,t) = \alpha \phi_0(\alpha(x+ct)), \quad c = \alpha^2 c_0,
\end{equation}
where the parameter $\alpha > 0$ is arbitrary.

The family of the snoidal solutions  (\ref{sn_potential}) is related to two (symmetric) spectral bands of the eigenvalue problem (\ref{eigen}). This family is not equivalent to the general traveling periodic wave solution to the mKdV equation which has three spectral bands. In the case of the focusing mKdV equation, a similar constraint on the general family of elliptic solutions to the NLS equation \cite{Feng} generates only two particular (dnoidal and cnoidal) traveling periodic wave solutions to the mKdV equation \cite{Chen,Sun1,Sun2}, which are not equivalent to the general traveling periodic wave solutions explored in \cite{Chen-JNLS}. 

The scopes of our work are restricted to the dark breathers on the snoidal background (\ref{sn_potential}). We do so by using  the one-fold Darboux transformation and by transforming the solution to the simplest form due to properties of the Jacobi theta functions including the half-period and quarter-period transflations in the complex plane.  As the main novelty of our work, the obtained solutions did not appear in the previous publications on the defocusing NLS equations in  \cite{Lou,Shin,Takahashi}. The explicit expressions are used to draw information about the physically observable parameters such as the breather speed, localization width, and breather phase shift.

Breathers are constructed by picking an eigenvalue in one of the two (symmetric) spectral gaps of the Lax spectrum associated with the snoidal background (\ref{sn_potential}). This yields dark breathers, for which the dark soliton propagates on the snoidal background as a depression wave. These breathers are topological because they impart a phase
shift to the snoidal wave background. We show that dark breathers propagate faster than the snoidal wave background, while imparting a positive phase shift.

It remains open for further studies to obtain similar formulas for the general traveling periodic wave solutions related to three spectral gaps, in which case we anticipate co-existence of two breather solutions: dark breathers in the two (symmetric) gaps and kink breathers in the central gap. One technical problem which needs to be solved for construction of such breathers is to reformulate the solution $u(x,t)$ of the mKdV equation (\ref{1}) in the form of a quotient of a product of Jacobi theta functions (see the recent work in \cite{Geng}) and to obtain the explicit solutions of the Lax system in a similar form of a quotient of a product of Jacobi theta functions. This problem is left for future studies.

The organization of this paper is as follows. The main results featuring the closed-form expression for dark breathers are explained in Section \ref{sec-2}. 
Breather characteristics are described in Section \ref{sec-3}. The technical details of the proof are developed in Section \ref{sec-4}. Appendix \ref{sec-app} collects together some known relations between Jacobi elliptic functions. 

\vspace{0.2cm}

{\bf Acknowledgement.}  The authors thank M. A. Hoefer for many useful discussions and collaborations during the project. D. E. Pelinovsky 
is supported in part by the National Natural Science Foundation 
of China (No. 12371248).

\section{Main results}
\label{sec-2}

Throughout the work, we make use of two of the four Jacobi's theta functions \cite{Lawden}:
\begin{align*}
\theta_1(y) &= 2 \sum\limits_{n=1}^{\infty} (-1)^{n-1} q^{(n-\frac{1}{2})^2} \sin(2n-1) y
\end{align*}
and
\begin{align*}
\theta_4(y) &= 1 + 2 \sum\limits_{n=1}^{\infty} (-1)^{n} q^{n^2} \cos 2n y
\end{align*}
where $q := e^{-\frac{\pi K'(k)}{K(k)}}$ with $K(k)$ being the complete elliptic integral and $K'(k) = K(k')$ with $k' = \sqrt{1-k^2}$. It is well-known \cite{Lawden} that $K(k)$ is a quarter period and $i K'(k)$ is a half period of the Jacobi elliptic function $\sn(x;k)$ with the correspondence 
$$
y = \frac{\pi x}{2 K(k)}.
$$ 
For notational convenience, 
we use $H(x) = \theta_1(y)$, $\Theta(x) = \theta_4(y)$, and drop the dependence of $k \in (0,1)$ in the elliptic integrals and elliptic functions unless it creates a confusion. We also use $Z(x) = \Theta'(x)/\Theta(x)$. Using the relation \cite[(2.1.1)]{Lawden}:
\begin{align}
\label{sn-theta-quotient}
\sn(x;k) = \frac{H(x)}{\sqrt{k} \Theta(x)},  
\end{align}
we write the profile $\phi_0$ of the traveling wave (\ref{sn_potential}) in the equivalent form:
\begin{equation}
\label{sn-theta}
\phi_0(x) = \frac{H(K)}{\Theta(K)} \; \frac{H(x)}{\Theta(x)}.
\end{equation}

The following one-fold Darboux transformation allows us to obtain a new solution $\hat{u}$ of mKdV equation (\ref{1}) from the old solution $u$:
\begin{equation}\label{DT}
\hat{u}=u-\frac{4i\zeta p q}{p^2-q^2},
\end{equation}
where $\varphi = (p,q)^T$ is a particular nonzero solution of the linear systems  (\ref{spectral}) and (\ref{time_evol}), associated with the potential $u$ for a particular value of the spectral parameter $\zeta$. 
The validity of the one-fold Darboux 
transformation formula was recently confirmed in Appendix A in \cite{Chen} for the focusing mKdV equation. The new solution $\hat{u}$ in (\ref{DT}) is real if the complex-conjugate reduction $q = \bar{p}$ is satisfied on the solution $\varphi = (p,q)^T$ of the linear systems  (\ref{spectral}) and (\ref{time_evol}) with real $u$ and real $\zeta$. The solution is singular 
if $p = p(x,t)$ becomes real at a point $(x,t) \in \R\times \R$, and the main challenge of using the one-fold transformation (\ref{DT}) is to ensure 
that $\hat{u}(x,t)$ is bounded for every $(x,t) \in \R\times \R$.

We are now ready to present the main results of this work.

\begin{theorem}
	\label{theorem-1}
	Let $\alpha \in (0,K)$ be a free parameter in addition to $k \in (0,1)$ and define $\beta$ and $\gamma$ by 
	\begin{align}
	\label{beta}
	\beta := \left[ 1 - \frac{2 (1+k)^2 \sn^2(\alpha)}{(1 + k \sn^2(\alpha))^2} \right] \frac{\Theta(2 \alpha)}{\Theta(0)}, \qquad 
	\gamma := \frac{\Theta(2 \alpha)}{\Theta(0)}, 
	\end{align}
	The new solution of the mKdV equation (\ref{1}) for the dark breather is given in the form:
\begin{equation}
\label{new-solution}
\hat{u}(x,t) = \frac{H(K)}{\Theta(K)} \;
\frac{H(\xi + 2 \alpha) e^{-2\eta} + H(\xi - 2 \alpha) e^{2\eta}  + 2 \beta H(\xi) }{\Theta(\xi + 2 \alpha) e^{-2\eta} + \Theta(\xi - 2 \alpha) e^{2\eta}  + 2 \gamma \Theta(\xi)},
\end{equation}
where $\xi = x + c_0 t$ with $c_0 = 1+k^2$ and $\eta = \kappa (x + ct + x_0)$ with 
\begin{equation}
\label{kappa}
\kappa :=  Z(\alpha) + \frac{k\sn(\alpha)\cn(\alpha)\dn(\alpha)}{1+k\sn^2(\alpha)} > 0,
\end{equation}
\begin{equation}
\label{speed}
c := c_0 +\frac{2k(1+k)^2[1-k\sn^2(\alpha)] \sn(\alpha)\dn(\alpha)\cn(\alpha)}{[Z(\alpha)[1+k\sn^2(\alpha)]+k\sn(\alpha)\dn(\alpha)\cn(\alpha)][1+k\sn^2(\alpha)]^2} > c_0.
\end{equation}
and arbitary $x_0 \in \mathbb{R}$.
\end{theorem}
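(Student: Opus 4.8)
The plan is to construct an explicit solution $\varphi = (p,q)^T$ of the Lax system (\ref{spectral})--(\ref{time_evol}) for the snoidal potential $u = \phi_0(x+c_0 t)$ at a real spectral parameter $\zeta$ lying in one of the symmetric spectral gaps, then feed it into the Darboux transformation (\ref{DT}) and simplify. The first step is to separate variables. Since the potential depends only on $\xi = x + c_0 t$, I would look for solutions of the form $p(x,t) = P(\xi) e^{\lambda t}$, $q(x,t) = Q(\xi) e^{\lambda t}$ and reduce the system to a linear ODE in $\xi$. The spatial Dirac system (\ref{spectral}) with $u = k\sn(\xi;k)$ is a classical Lam\'e-type problem whose bounded eigenfunctions are known to be expressible through ratios of Jacobi theta functions: I would take the ansatz $P(\xi) = C_1 \frac{\Theta(\xi - \alpha)}{\Theta(\xi)} e^{Z(\alpha)\xi}$ and a companion expression for $Q$, where the auxiliary parameter $\alpha \in (0,K)$ parametrizes the gap eigenvalue $\zeta = \zeta(\alpha)$. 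Plugging this into (\ref{spectral}) gives algebraic relations fixing $\zeta$ and the ratio $C_1/C_2$ in terms of $\alpha$, $\sn(\alpha)$, $\cn(\alpha)$, $\dn(\alpha)$ via the addition formulas and the known logarithmic-derivative identities for theta functions collected in the Appendix; the exponential rate $\kappa$ in (\ref{kappa}) should emerge as the real part of $Z(\alpha)$ corrected by the $k\sn\cn\dn/(1+k\sn^2)$ term coming from the second component.

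Next I would impose the time dependence. Substituting the separated ansatz into (\ref{time_evol}) and using the already-fixed $\zeta$ and profile, the $t$-exponential rate $\lambda$ is determined; combining it with the spatial exponential $e^{Z(\alpha)\xi}$ and rewriting $\xi = x + c_0 t$ produces the travelling combination $\eta = \kappa(x + ct + x_0)$, and matching the coefficient of $t$ yields precisely the breather speed $c$ in (\ref{speed}). The arbitrary $x_0$ enters as the free multiplicative constant between the two independent exponential branches. At this point I would invoke the complex-conjugate reduction: I would check that for $\zeta$ real in the gap the second solution of the Lax system is $\bar\varphi$ up to constants, so that choosing $q = \bar p$ gives a real $\hat u$. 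Concretely this means $p$ is a linear combination of $e^{-\eta}\frac{\Theta(\xi+\alpha)}{\Theta(\xi)}e^{\text{(imaginary phase)}}$-type terms and its conjugate.

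The final step is the algebraic simplification to the stated closed form. Writing $p = a_1 e^{-\eta} \frac{H \text{ or } \Theta (\xi + \alpha)}{\cdots} + a_2 e^{\eta}(\cdots)$ and $q = \bar p$, I compute $p^2 - q^2 = (p-q)(p+q)$ and $pq$, expand the products of theta functions at arguments $\xi \pm \alpha$, and apply the product-to-sum identities for $H$ and $\Theta$ (the standard ones expressing $\Theta(\xi+\alpha)\Theta(\xi-\alpha)$, $H(\xi+\alpha)H(\xi-\alpha)$, etc., as combinations of $\Theta(2\alpha)$, $\Theta(0)$, $\Theta(2\xi)$, $H(2\xi)$) to collapse everything onto the arguments $\xi \pm 2\alpha$ and $\xi$. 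Tracking the constant coefficients through this reduction is where $\beta$ and $\gamma$ in (\ref{beta}) are produced — $\gamma = \Theta(2\alpha)/\Theta(0)$ being the "diagonal" term and $\beta$ carrying the extra factor $1 - 2(1+k)^2\sn^2(\alpha)/(1+k\sn^2(\alpha))^2$ from the cross term in $pq$ relative to that in $p^2-q^2$. Substituting $u = \phi_0(\xi) = \frac{H(K)}{\Theta(K)}\frac{H(\xi)}{\Theta(\xi)}$ from (\ref{sn-theta}) and combining over the common denominator $\Theta(\xi+2\alpha)e^{-2\eta} + \Theta(\xi-2\alpha)e^{2\eta} + 2\gamma\Theta(\xi)$ yields (\ref{new-solution}), with a quarter-period or half-period translation $\xi \mapsto \xi + iK'$ or $\xi \mapsto \xi + K$ applied at the end to replace residual factors of $\cn$, $\dn$ or $k$ and bring the numerator and denominator into the symmetric $H/\Theta$ form displayed. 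The positivity of $\kappa$ and of $c - c_0$, and boundedness of the denominator (no real zeros, i.e. $\hat u$ is globally smooth), would be verified separately: boundedness because $p$ never becomes real since its two exponential branches have a fixed nonzero relative phase, and the sign claims by monotonicity of $Z(\alpha)$ and positivity of $\sn,\cn,\dn$ on $(0,K)$.

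The main obstacle I anticipate is the theta-function bookkeeping in the last step: getting the arguments to consolidate exactly onto $\xi, \xi\pm 2\alpha$ with the precise constants $\beta,\gamma$ requires the correct choice among the several quarter-period-shifted representations of the Lax eigenfunction, and a wrong branch choice produces an equivalent but messier formula. Equivalently, the subtle point is choosing $\alpha$ (rather than some shifted parameter) so that the eigenfunction has the clean form above and the reduction $q = \bar p$ holds with $\zeta$ real; pinning down the correct normalization of the eigenfunction so that the Darboux formula lands on (\ref{new-solution}) without leftover constants is the delicate part. Everything else is a guided, if lengthy, computation using the identities in Appendix \ref{sec-app}.
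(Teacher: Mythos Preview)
Your outline matches the paper's strategy: theta-function eigenfunctions for the snoidal Dirac problem, two-mode superposition, Darboux, theta identities, and a half-period translation. But there is one genuine gap and one refinement needed.

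The gap is your boundedness argument. You claim $\hat u$ is smooth because ``$p$ never becomes real since its two exponential branches have a fixed nonzero relative phase,'' but this is false for the eigenfunctions actually used: the raw two-mode Darboux output is \emph{singular}. The one-mode limits ($\eta \to \pm\infty$) give $\hat u \to -1/\sn(\xi \mp 2\alpha)$ (Proposition~\ref{prop-7}), which blows up at the zeros of $\sn$, so $p^2 - q^2$ does vanish along real curves in the $(x,t)$-plane. The half-period translation $\xi \mapsto \xi + iK'$ is therefore not a cosmetic tidy-up but the mechanism that produces a bounded solution: it converts the singular $-1/\sn$ asymptotics into the bounded $k\,\sn$ form via (\ref{appendix-2}), and only \emph{after} this translation does the denominator become a manifestly positive sum of $\Theta$-values. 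Concretely, the paper packages $p^2,q^2,pq$ through the auxiliary $G(x,\alpha) = [1-k^2\sn^2 x\,\sn^2\alpha]\,\Theta^2(x)\Theta^2(\alpha)$, translates by $iK'$, and then invokes the nontrivial elliptic identity (\ref{A-expression-zero}) of Proposition~\ref{prop-6} to factor out a common $\sn(\xi)$; the collapse onto arguments $\xi,\xi\pm 2\alpha$ then uses $G(x,\alpha) = \Theta^2(0)\Theta(x+\alpha)\Theta(x-\alpha)$ from (\ref{appendix-7}). Your ``product-to-sum'' picture is right in spirit, but the route runs through this specific chain and through the further identity of Proposition~\ref{prop-8}.

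The refinement concerns your eigenfunction ansatz. A Hermite-type $\Theta(\xi-\alpha)/\Theta(\xi)\cdot e^{Z(\alpha)\xi}$ with a \emph{real} shift $\alpha$ lands $\zeta$ in a spectral band, not in the gap $(\zeta_-,\zeta_+)$. The paper parametrizes the gap by the complex shift $z = \tfrac12 K' + i\alpha$ (Proposition~\ref{prop-1}), so the eigenfunction arguments in (\ref{p}) are $\xi + \alpha - \tfrac{i}{2}K'$; handling this quarter-period imaginary shift is what forces the squared-theta formulas of Proposition~\ref{prop-5}, which the authors flag as new. Your closing paragraph correctly anticipates that choosing the right shifted representation is the delicate point, but the specific shift $iK'/2$ and the resulting formulas (\ref{squared-rel-2})--(\ref{squared-rel-1}) are the concrete content you would need to supply before any of the downstream algebra can begin.
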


\begin{remark}
Since $\Theta(x) > 0$ for every $x \in \mathbb{R} $, the new solution $\hat{u}(x,t)$ is non-singular for every $(x,t) \in \mathbb{R} \times \mathbb{R}$.
\end{remark}

\begin{remark}
The value of $\alpha$ is uniquely defined by the spectral parameter $\zeta$ from the characteristic equation:
\begin{equation}
\label{char}
\zeta =\frac{1+k}{2} \; \frac{1-k\sn^2(\alpha)}{1+k\sn^2(\alpha)}
\end{equation}
so that if $\alpha \in (0,K)$, then $\zeta \in (\zeta_-,\zeta_+)$ 
with $\zeta_{\pm} := \frac{1}{2} (1 \pm k)$. Intervals 
$(-\zeta_+,-\zeta_-)$ and $(\zeta_-,\zeta_+)$ are two (symmetric) gaps 
in the Lax spectrum in the spectral problem (\ref{eigen}) associated with the potential $u(x,t) = \phi_0(x+c_0t)$ with the profile $\phi_0$ given by (\ref{sn-theta}).
\end{remark}

\begin{remark}
	The value $\alpha \in (0,K)$ determines the phase shift of the solitary wave propagating across the snoidal background because 
	$$
	\lim_{\eta \to \pm \infty} \hat{u}(x,t) = k \sn(\xi \mp 2 \alpha).
	$$
	Since the period of $\sn(\xi)$ is $4K$, a suitably normalized phase shift can be defined by 
\begin{equation}
\label{Delta}
\Delta := \frac{2\pi(4\alpha)}{4K}=\frac{2\pi\alpha}{K}\in (0,2\pi).
\end{equation}
The inverse localization width of the solitary wave is defined by $\kappa$ in (\ref{kappa}) and its velocity is defined by  $c$ in (\ref{speed}). 
\end{remark}

Figure \ref{kink} shows the spatiotemporal evolution of a dark breather on the snoidal wave background given by the new solution in Theorem \ref{theorem-1}. The breather travels faster than the periodic wave (both waves travel to the left) and imparts a phase shift. 

\begin{figure}[htb!]
	\centering
	\includegraphics[width=7.5cm,height=7cm]{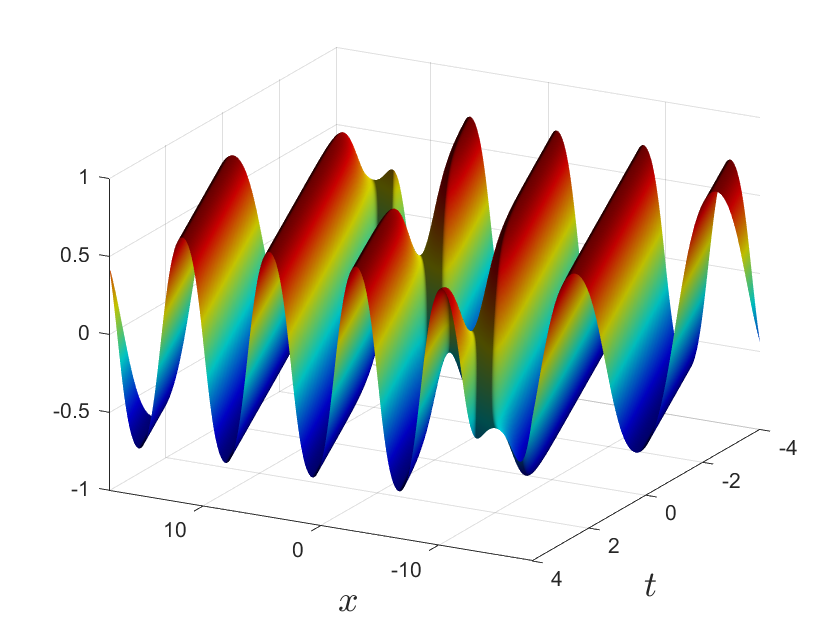}
	\includegraphics[width=7cm,height=7cm]{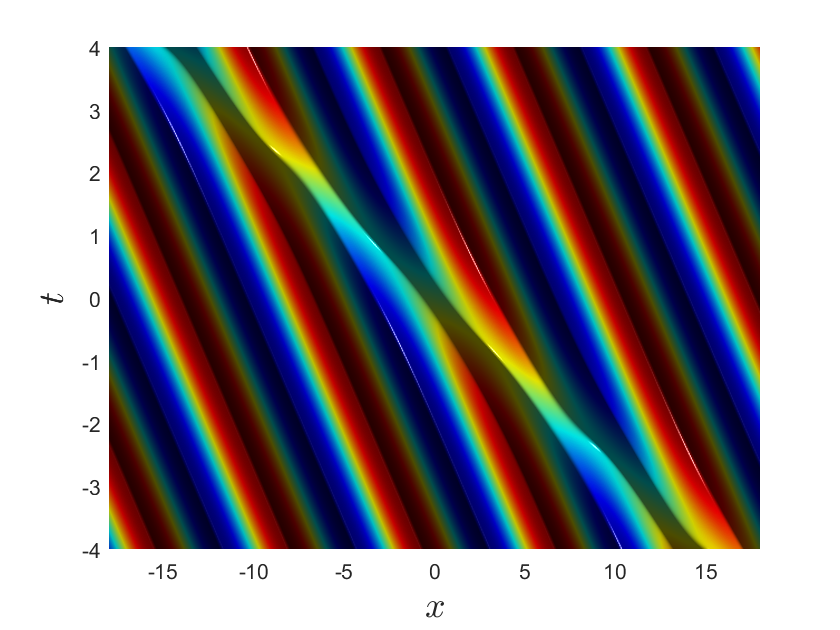}
	\caption{Breather on the periodic wave background for $\alpha=0.3K(k)$, $k =0.7$, and $x_0=0$.}
	\label{kink}
\end{figure}

\begin{corollary}
	\label{theorem-2}
In the limit $k \to 1$, the family of dark breathers of Theorem \ref{theorem-1} generates a two-soliton solution of the mKdV equation (\ref{1}) in the form:
\begin{equation}
\label{two-solitons}
\hat{u}(x,t) = 
\frac{\sinh(\xi+2\alpha)  e^{-2\eta}  +  \sinh(\xi-2\alpha) e^{2\eta} + 2 
	\sinh(\xi) (1 - \sinh^2(2\alpha)) {\rm sech}(2\alpha)}{\cosh(\xi+2\alpha)  e^{-2\eta}  +  \cosh(\xi-2\alpha) e^{2\eta} + 2 
	\cosh(\xi) \cosh(2\alpha)},
\end{equation}
where
\begin{align*}
\xi &= x + 2 t, \\
\eta &= \tanh(2\alpha) [x  + 2 t +  4 t \; {\rm sech}^2(2\alpha) + x_0],
\end{align*}
with $x_0 \in \mathbb{R}$ and $\alpha \in (0,\infty)$ being free parameters. 
\end{corollary}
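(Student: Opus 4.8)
The plan is to pass directly to the limit $k \to 1^-$ in the closed-form expression (\ref{new-solution}), holding the parameters $\alpha \in (0,\infty)$ and $x_0 \in \R$ fixed; this is legitimate because $K = K(k) \to \infty$ as $k \to 1^-$, so that $\alpha < K$ for all $k$ sufficiently close to $1$. The elementary degenerations we will use are: $c_0 = 1 + k^2 \to 2$; the prefactor $H(K)/\Theta(K) = \sqrt{k}\,\sn(K;k) = \sqrt{k} \to 1$; and, uniformly on compact subsets of $\R$, $\sn(\cdot) \to \tanh(\cdot)$, $\cn(\cdot) \to {\rm sech}(\cdot)$, $\dn(\cdot) \to {\rm sech}(\cdot)$, together with $Z(\cdot)=\Theta'(\cdot)/\Theta(\cdot) \to \tanh(\cdot)$ for the Jacobi zeta function (since $\dn^2 \to {\rm sech}^2$ and $E(k)/K(k) \to 0$).

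Substituting these into (\ref{beta}), (\ref{kappa}) and (\ref{speed}) and simplifying with hyperbolic identities (using $1 + \tanh^2\alpha = \cosh(2\alpha)/\cosh^2\alpha$ and $2\sinh\alpha\cosh\alpha = \sinh(2\alpha)$) yields
$$ \kappa \to \tanh(2\alpha), \qquad c \to 2 + 4\,{\rm sech}^2(2\alpha), \qquad \gamma \to \cosh(2\alpha), \qquad \beta \to \bigl(1 - \sinh^2(2\alpha)\bigr){\rm sech}(2\alpha). $$
Consequently $\xi = x + c_0 t \to x + 2t$ and $\eta = \kappa(x + ct + x_0) \to \tanh(2\alpha)\bigl[x + 2t + 4t\,{\rm sech}^2(2\alpha) + x_0\bigr]$, i.e. the variables $\xi$ and $\eta$ converge to those displayed in (\ref{two-solitons}).

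The one analytically nontrivial input — and the main obstacle — is the behaviour of the theta functions $H$ and $\Theta$ themselves as $k \to 1^-$, where the nome $q = e^{-\pi K'(k)/K(k)}$ tends to $1$ and the defining series degenerate. I would control this by means of the Jacobi imaginary (modular) transformation, which trades $q \to 1$ for the dual nome $\tilde q = e^{-\pi K(k)/K'(k)}$; since $K'(k) \to \tfrac{\pi}{2}$ and $K(k) \to \infty$, one has $\tilde q \to 0$ and the transformed series are well behaved. Evaluating the transformed arguments with $K'(k) \to \tfrac{\pi}{2}$ produces a single positive normalizing factor $C(k)$ with
$$ H(s) = C(k)\bigl[\sinh(s) + o(1)\bigr], \qquad \Theta(s) = C(k)\bigl[\cosh(s) + o(1)\bigr] \qquad (k \to 1^-), $$
uniformly for $s$ in compact subsets of $\R$; as a consistency check the quotient recovers $\sqrt{k}\,\sn(s;k) \to \tanh(s)$ and $\Theta'(s)/\Theta(s) \to \tanh(s)$, in agreement with (\ref{sn-theta-quotient}) and the limit of $Z$.

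Finally, insert these asymptotics into (\ref{new-solution}), evaluated at a fixed point $(x,t)$: the common factor $C(k)$ cancels between numerator and denominator, the prefactor $H(K)/\Theta(K)$ tends to $1$, and the terms $2\beta H(\xi)$ and $2\gamma\Theta(\xi)$ tend to $2(1-\sinh^2(2\alpha)){\rm sech}(2\alpha)\sinh(\xi)$ and $2\cosh(2\alpha)\cosh(\xi)$ respectively. The result is exactly the expression (\ref{two-solitons}) with the stated $\xi$ and $\eta$. Since the convergence is in fact locally uniform in $(x,t)$ together with all derivatives, the limit solves (\ref{1}); alternatively one verifies directly that (\ref{two-solitons}) satisfies (\ref{1}), and that $\eta \to \pm\infty$ selects the two kink states $\tanh(\xi \mp 2\alpha)$, which identifies it as a two-soliton solution.
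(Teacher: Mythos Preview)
Your proposal is correct and follows essentially the same approach as the paper: pass to the limit $k \to 1$ in (\ref{new-solution}) using the standard degenerations of the Jacobi elliptic functions together with the asymptotics $\Theta(s) \sim C(k)\cosh(s)$, $H(s) \sim C(k)\sinh(s)$ (which the paper simply sources from \cite{HMP}, while you sketch the modular-transformation derivation), and then simplify $\beta$, $\gamma$, $\kappa$, $c$ via hyperbolic identities. The computations you outline are exactly those carried out in the paper, with the minor addition that you make explicit why any fixed $\alpha>0$ eventually lies in $(0,K)$.
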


\begin{remark}
	The parametrization formula (\ref{char}) in the limit $k \to 1$ becomes 
	$\zeta = {\rm sech}(2\alpha)$ so that if the first soliton has speed $2$, then the second soliton has speed $2 + 4 \zeta^2$ with $\zeta \in (0,1)$.
\end{remark}

Figure \ref{two-soliton} shows the spatiotemporal evolution of the two-soliton solution with the profile $\hat{u}$ for $\alpha = 0.6$ (which corresponds to the eigenvalue $\zeta = 0.552$). The two solitons propagate with different speeds, collide, and scatter after some interaction.

\begin{figure}[htb!]
	\centering
	\includegraphics[width=7.5cm,height=7cm]{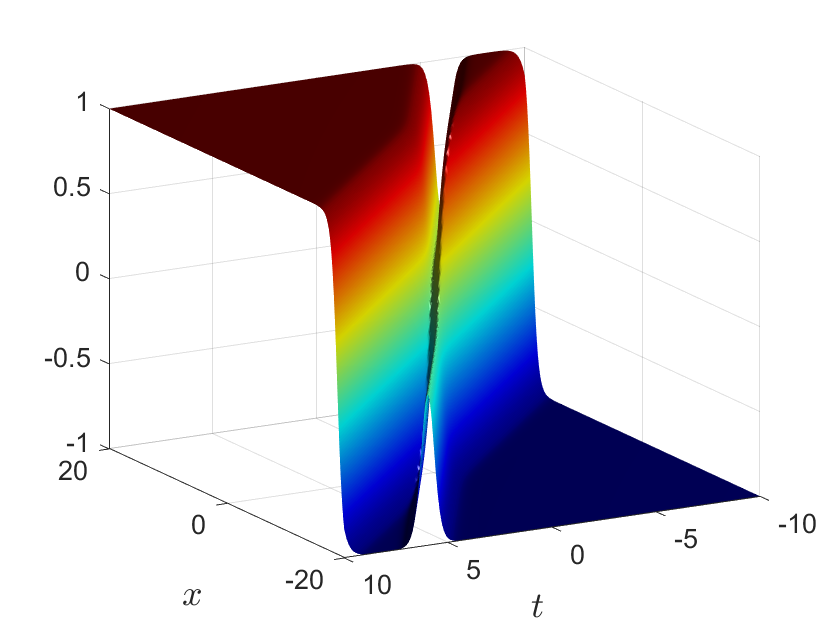}
	\includegraphics[width=7cm,height=7cm]{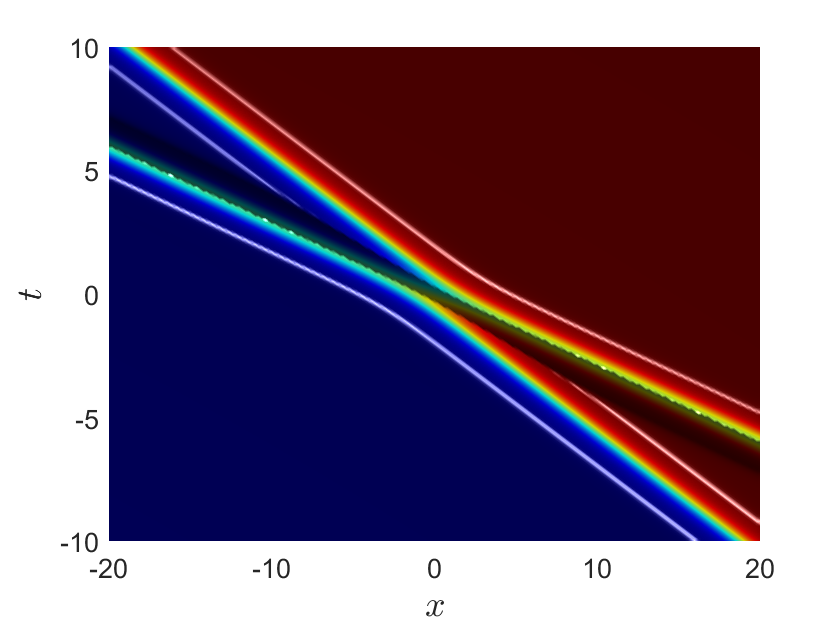}
	\caption{Two-soliton solution of the mKdV equation (\ref{1}) for $\alpha= 0.6$ and $x_0 = 0$.}
	\label{two-soliton}
\end{figure}

\section{Properties of the dark breather}
\label{sec-3}

Here we explore the characteristic properties of the breather solutions given by  Theorem \ref{theorem-1}. In particular, we analyze the breather phase shift $\Delta$, the localization parameter $\kappa$, 
and the breather speed $c$. All parameters are characterized by $\alpha \in (0,K)$ which is uniquely determined via the characteristic equation (\ref{char}) by the value  of the spectral parameter $\zeta$ in the spectral gap $(\zeta_-,\zeta_+)$.

The following two lemmas give monotonicity of the mapping $\zeta \to \Delta$ and the existence of a single maximum in the mapping $\zeta \to \kappa$.

\begin{lemma}
    \label{lem-1}
    The phase shift $\Delta$ is a monotonically decreasing function of $\zeta$ in $(\zeta_-,\zeta_+)$ with $\Delta(\zeta_-) = 2\pi$ and $\Delta(\zeta_+) = 0$.
\end{lemma}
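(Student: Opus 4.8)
The plan is to express $\Delta$ directly as a function of $\zeta$ by inverting the characteristic relation \eqref{char}, and then differentiate. From \eqref{char} we solve for $\sn^2(\alpha)$:
\begin{equation*}
\sn^2(\alpha) = \frac{1}{k} \cdot \frac{(1+k) - 2\zeta}{(1+k) + 2\zeta} = \frac{1}{k}\cdot\frac{\zeta_+ - \zeta}{\zeta_+ + \zeta},
\end{equation*}
using $\zeta_\pm = \tfrac12(1\pm k)$ (so $2\zeta_+ = 1+k$). We first check the endpoint behavior: as $\zeta \to \zeta_+^-$ the right-hand side tends to $0$, hence $\sn^2(\alpha)\to 0$, so $\alpha \to 0$ and $\Delta = 2\pi\alpha/K \to 0$; as $\zeta \to \zeta_-^+ = \tfrac12(1-k)$ the right-hand side becomes $\tfrac1k\cdot\tfrac{2k}{2}=1$, hence $\sn^2(\alpha)\to 1$, so $\alpha \to K$ and $\Delta \to 2\pi$. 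This establishes the boundary values $\Delta(\zeta_-)=2\pi$, $\Delta(\zeta_+)=0$.

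For monotonicity, since $\Delta = 2\pi\alpha/K$ is an increasing function of $\alpha$, it suffices to show $\alpha$ is a strictly decreasing function of $\zeta$ on $(\zeta_-,\zeta_+)$; equivalently, that $\sn^2(\alpha)$ is strictly decreasing in $\zeta$ (because $\sn$ is strictly increasing on $(0,K)$ and hence $\sn^2$ is a strictly increasing, invertible map from $(0,K)$ onto $(0,1)$). This reduces to showing that the explicit rational function
\begin{equation*}
g(\zeta) := \frac{\zeta_+ - \zeta}{\zeta_+ + \zeta}
\end{equation*}
is strictly decreasing on $(\zeta_-,\zeta_+)$, which is immediate: $g'(\zeta) = -2\zeta_+/(\zeta_+ + \zeta)^2 < 0$ since $\zeta_+ > 0$ and $\zeta_+ + \zeta > 0$ throughout the interval. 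Composing with the decreasing-in-nothing facts above — $\zeta \mapsto g(\zeta)$ decreasing, $g \mapsto \sn^2(\alpha)=g/k$ increasing, $\sn^2(\alpha)\mapsto \alpha$ increasing, $\alpha \mapsto \Delta$ increasing — yields that $\zeta \mapsto \Delta$ is strictly decreasing.

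There is essentially no obstacle here: the only mild point requiring care is the legitimacy of inverting \eqref{char}, i.e. that $\alpha \in (0,K) \mapsto \zeta \in (\zeta_-,\zeta_+)$ is a bijection. This is already asserted in the remark following Theorem \ref{theorem-1}, and in any case follows from the chain of strict monotonicities just used together with the computed endpoint limits. One could alternatively prove the lemma by differentiating \eqref{char} implicitly to get $\frac{d\zeta}{d\alpha} = -(1+k)\,k\,\frac{\sn(\alpha)\cn(\alpha)\dn(\alpha)}{[1+k\sn^2(\alpha)]^2} < 0$ on $(0,K)$, using $\frac{d}{d\alpha}\sn(\alpha)=\cn(\alpha)\dn(\alpha)$ and positivity of $\cn,\dn$ on $(0,K)$; but the algebraic inversion above is cleaner and also delivers the endpoint values in one stroke.
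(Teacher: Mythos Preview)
Your proof is correct and follows essentially the same approach as the paper: invert the characteristic equation \eqref{char} to express $\sn(\alpha)$ (equivalently $\sn^2(\alpha)$) as an explicit decreasing function of $\zeta$, then compose with the monotone map back to $\alpha$ --- the paper routes this through the amplitude $\varphi_\alpha$ via $\alpha = F(\varphi_\alpha,k)$ and $\sn(\alpha)=\sin\varphi_\alpha$, whereas you use $\sn^2$ directly, but the idea is identical. One minor slip in your parenthetical alternative: implicit differentiation of \eqref{char} actually gives $\dfrac{d\zeta}{d\alpha} = -\dfrac{2k(1+k)\,\sn(\alpha)\cn(\alpha)\dn(\alpha)}{[1+k\sn^2(\alpha)]^2}$, with a factor of $2$ you dropped, though the sign (which is all that matters) is unaffected.
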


\begin{proof}
	It follows from the characteristic equation (\ref{char}) that 
	\begin{equation}\label{sin_phi}
	\sn(\alpha) = \frac{\sqrt{1+k-2\zeta}}{\sqrt{k(1+k+2\zeta)}} = \sin(\varphi_{\alpha}),
	\end{equation}
	where $\varphi_{\alpha}$ is defined by $\alpha$ through the incomplete elliptic integral of the first kind as $\alpha = F(\varphi_{\alpha},k)$. 
	If $\alpha \in (0,K)$, then $\varphi_{\alpha} \in \left(0,\frac{\pi}{2}\right)$. Using (\ref{Delta}) together with (\ref{sin_phi}), we obtain 
$$
\partial_{\zeta} \Delta = \frac{2\pi}{K} \partial_{\varphi_{\alpha}} F(\varphi_{\alpha},k)\partial_{\zeta}\varphi_{\alpha}.
$$
Differentiating (\ref{sin_phi}) in $\zeta$ yields $\partial_{\zeta} \varphi_{\alpha} < 0$ for $\varphi_{\alpha} \in \left(0,\frac{\pi}{2}\right)$. 
On the other hand, it follows from the definition of the incomplete elliptic integral that $\partial_{\varphi} F(\varphi,k) > 0$. Hence, we have 
$\partial_{\zeta} \Delta < 0$. Since $\alpha = 0$ at $\zeta = \zeta_+$ 
and $\alpha = K$ at $\zeta = \zeta_-$, it follows from (\ref{Delta}) 
that $\Delta(\zeta_-) = 2\pi$ and $\Delta(\zeta_+) = 0$. 
\end{proof}

\begin{lemma}
	\label{lem-2}
	The localization parameter $\kappa$ admits the only extremal (maximum) point in $(\zeta_-,\zeta_+)$ at 
	\begin{equation}
	\label{maximum}
	\zeta_0 = \sqrt{\frac{E}{2K}-\frac{(1-k^2)}{4}},
	\end{equation}
	where $E$ is a complete elliptic integral of the second kind.
\end{lemma}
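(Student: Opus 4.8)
The plan is to reduce the problem to a single transcendental equation in the parameter $\alpha$ and then show that this equation has exactly one root. First I would observe that the characteristic equation (\ref{char}) defines a smooth bijection $\alpha \mapsto \zeta$ from $(0,K)$ onto $(\zeta_-,\zeta_+)$: writing $m := \sn^2(\alpha) \in (0,1)$ we have $\zeta = \frac{1+k}{2}\,\frac{1-km}{1+km}$, so that
\[
\frac{d\zeta}{d\alpha} = -\frac{k(1+k)}{(1+km)^2}\,\frac{dm}{d\alpha}, \qquad \frac{dm}{d\alpha} = 2\sn(\alpha)\cn(\alpha)\dn(\alpha) > 0 \quad \text{on } (0,K),
\]
hence $d\zeta/d\alpha \neq 0$ throughout. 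By the chain rule the critical points of $\zeta \mapsto \kappa$ coincide with the zeros of $d\kappa/d\alpha$, so it suffices to analyze the latter.

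Next I would compute $d\kappa/d\alpha$ from (\ref{kappa}) using the derivative rules $\frac{d}{d\alpha}\sn(\alpha) = \cn(\alpha)\dn(\alpha)$, $\frac{d}{d\alpha}\cn(\alpha) = -\sn(\alpha)\dn(\alpha)$, $\frac{d}{d\alpha}\dn(\alpha) = -k^2\sn(\alpha)\cn(\alpha)$, together with the standard identity $Z'(\alpha) = \dn^2(\alpha) - E/K$ for the Jacobi zeta function $Z = \Theta'/\Theta$. After substituting $\cn^2(\alpha) = 1-m$ and $\dn^2(\alpha) = 1-k^2 m$, clearing the denominator $(1+km)^2$, and collecting terms (the cubic terms in $m$ cancel), the equation $d\kappa/d\alpha = 0$ reduces to
\[
\frac{E}{K} = \frac{(1+k)\,(1 - 2k^2 m + k^2 m^2)}{(1+km)^2} =: g(m).
\]
This is the step I expect to be the main obstacle, since it is a moderately lengthy polynomial identity; as a quick consistency check, both sides equal $1+k$ at $m=0$ and $1-k$ at $m=1$.

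Then I would show that $g$ is strictly monotone on $(0,1)$. A direct differentiation gives
\[
g'(m) = \frac{2k(1+k)^2\,(km-1)}{(1+km)^3} < 0 \qquad \text{for } m \in (0,1),
\]
so $g$ decreases strictly from $g(0) = 1+k$ to $g(1) = 1-k$. Since $E/K < 1 < 1+k$, and since
\[
E - (1-k)K = k\int_0^{\pi/2} \frac{1 - k\sin^2\theta}{\sqrt{1-k^2\sin^2\theta}}\,d\theta > 0
\]
yields $E/K > 1-k$, the number $E/K$ lies in the open range $(1-k,1+k)$ of $g$; hence $g(m) = E/K$ has exactly one root $m_0 \in (0,1)$, corresponding to a unique $\alpha_0 \in (0,K)$, and thus to a unique critical value $\zeta_0 \in (\zeta_-,\zeta_+)$. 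To identify $\zeta_0$ I would use the elementary identity $\frac{(1+k)^2}{2}\,\frac{(1-km)^2}{(1+km)^2} + \frac{1-k^2}{2} = g(m)$, which together with (\ref{char}) gives $2\zeta_0^2 + \frac{1-k^2}{2} = \frac{E}{K}$, i.e. the formula (\ref{maximum}).

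Finally, to see that $\zeta_0$ is a maximum, I would note that $\kappa$ extends continuously to $[0,K]$ with $\kappa = 0$ at both endpoints — using $Z(0) = Z(K) = 0$ and $\cn(K) = 0$ — while $\kappa > 0$ on $(0,K)$ by (\ref{kappa}). Therefore $\alpha \mapsto \kappa(\alpha)$ attains its maximum at an interior point, which must be the unique critical point $\alpha_0$; equivalently, $\zeta \mapsto \kappa$ has its only extremum, a maximum, at $\zeta_0$.
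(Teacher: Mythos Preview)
Your proof is correct and follows essentially the same strategy as the paper's: reduce to the variable $\alpha$, differentiate $\kappa$ using $Z'(\alpha)=\dn^2(\alpha)-E/K$, simplify the critical-point condition to $E/K = 2\zeta^2 + (1-k^2)/2$, and use $\kappa(0)=\kappa(K)=0$, $\kappa>0$ on $(0,K)$ to conclude the unique critical point is a maximum. The only notable difference is in bookkeeping and in the uniqueness argument: you substitute $m=\sn^2(\alpha)$ early and do straight polynomial algebra, whereas the paper works with the elliptic identities (in particular $\cn^2\alpha\,\dn^2\alpha+(1+k)^2\sn^2\alpha=(1+k\sn^2\alpha)^2$); and you establish uniqueness by showing $g$ is strictly decreasing with range $(1-k,1+k)\ni E/K$, while the paper reads uniqueness off the quadratic form $2\zeta^2 = E/K - (1-k^2)/2$ without separately checking that $\zeta_0$ lands in $(\zeta_-,\zeta_+)$. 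Your version is a little more self-contained on that point; otherwise the two arguments are the same.
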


\begin{proof}
	By Lemma \ref{lem-1}, the mapping $\zeta \to \Delta$ is monotone, 
	where $\Delta = 2 \pi \alpha/K$ in (\ref{Delta}).  Hence, we can check the mapping $\alpha \to \kappa$ instead of $\zeta \to \kappa$. Computing the derivative of (\ref{kappa}) in $\alpha$, we obtain a critical point of the mapping $\alpha \to \kappa$ from roots of the transcendal equation:
\begin{align}
\notag & 1-\frac{E}{K}-k^2\sn^2(\alpha)-\frac{2k^2\sn^2(\alpha)\cn^2(\alpha)\dn^2(\alpha)}{(1+k\sn^2(\alpha))^2} \\
& \qquad  + k \frac{\cn^2(\alpha)\dn^2(\alpha)-\sn^2(\alpha)\dn^2(\alpha)-k^2\sn^2(\alpha)\cn^2(\alpha)}{1+k\sn^2(\alpha)}=0,
\label{transc}
\end{align}
where we have used the formula $Z'(\alpha) = 1 - k^2\sn^2(\alpha) - E/K$. We can use 
\begin{equation}
\label{appendix-extra}
\cn^2(\alpha) \dn^2(\alpha) + (1+k)^2 \sn^2(\alpha) = [1+k \sn^2(\alpha)]^2
\end{equation} 
and the fundamental relations 
\begin{equation}
\label{fund-rel}
\cn^2(\alpha) = 1 - \sn^2(\alpha), \qquad \dn^2(\alpha)  = 1 - k^2 \sn^2(\alpha)
\end{equation} 
to obtain 
\begin{align*}
& \;\; 1-k^2\sn^2(\alpha)-\frac{2k^2\sn^2(\alpha)\cn^2(\alpha)\dn^2(\alpha)}{(1+k\sn^2(\alpha))^2} \\
& \qquad \qquad  + k \frac{\cn^2(\alpha)\dn^2(\alpha)-\sn^2(\alpha)\dn^2(\alpha)-k^2\sn^2(\alpha)\cn^2(\alpha)}{1+k\sn^2(\alpha)} \\
& = 1 - 3 k^2 \sn^2(\alpha) + \frac{2 k^2 (1+k)^2 \sn^4(\alpha)}{(1+k\sn^2(\alpha))^2} + k \frac{1 - 2(1+k^2) \sn^2(\alpha) + 3 k^2 \sn^4(\alpha)}{1 + k \sn^2(\alpha)} \\
& =  1 + \frac{2 k^2 (1+k)^2 \sn^4(\alpha)}{(1+k\sn^2(\alpha))^2} + k \frac{1 - 2(1+k^2) \sn^2(\alpha) - 3k \sn^2(\alpha)}{1 + k \sn^2(\alpha)} \\
& = 1 + k \frac{(1+k \sn^2(\alpha))^2 - 2 (1+k)^2 \sn^2(\alpha)}{(1+k\sn^2(\alpha))^2} \\
& = (1+k) \left[ 1 - \frac{2k (1+k) \sn^2(\alpha)}{(1+k\sn^2(\alpha))^2} \right].
\end{align*}
The transcendental equation (\ref{transc}) is rewritten in the form 
\begin{align*}
\frac{E}{K} &= \frac{(1-k^2)}{2}+\frac{(1+k)^2}{2}\frac{(1-k\sn^2(\alpha))^2}{(1+k\sn^2(\alpha))^2} \\
&= \frac{(1-k^2)}{2}+2\zeta^2,
\end{align*}
which yields (\ref{maximum}). It follows from (\ref{kappa}) that 
$\kappa > 0$ for $\alpha \in (0,K)$ with $\kappa \to 0$ as $\alpha \to 0$ and $\alpha \to K$. Since there is only one critical point of $\kappa$ for positive $\zeta$, the mapping $\zeta \to \kappa$ is monotonically increasing for $\zeta \in (\zeta_-,\zeta_0)$ and monotonically decreasing for $\zeta \in (\zeta_0,\zeta_+)$ with the global maximum in $(\zeta_-,\zeta_+)$ at $\zeta_0$.
\end{proof}

Figure \ref{sol_prop} plots $\Delta$, $\kappa$, and $c$ as a function of the spectral parameter $\zeta$ in the spectral gap $[\zeta_-,\zeta_+]$. The band edges $\zeta_-$ and $\zeta_+$ are shown by the vertical dashed lines. The phase shift $\Delta$ is monotonically decreasing between the band edges in agreement with Lemma \ref{lem-1}. The inverse width $\kappa$ has
a single maximum and vanishes at the band edges in agreement with Lemma \ref{lem-2}. The breather speed $c$ is monotonically increasing and satisfy $c > c_0$. 

\begin{figure}[htb!]
	\centering
	\includegraphics[width=5cm,height=5cm]{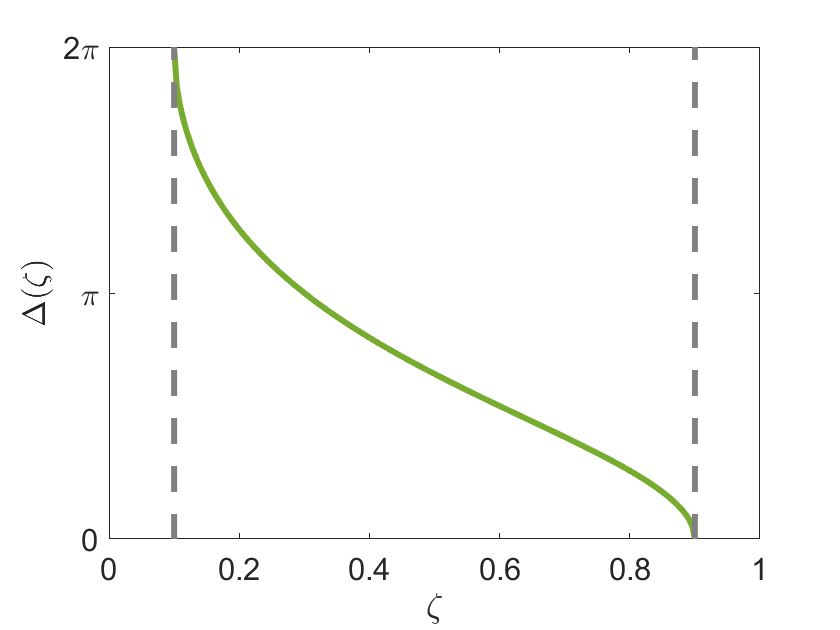}
	\includegraphics[width=5cm,height=5cm]{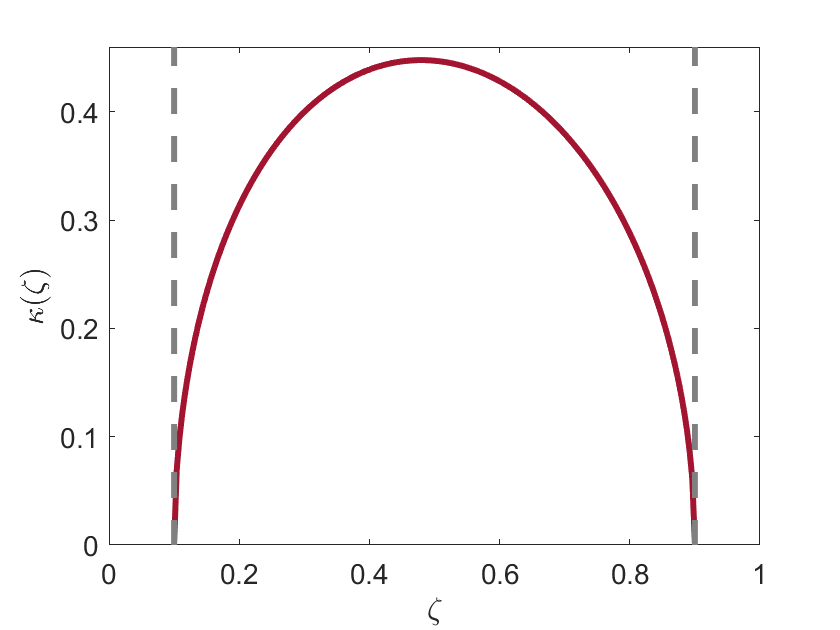}
	\includegraphics[width=5cm,height=5cm]{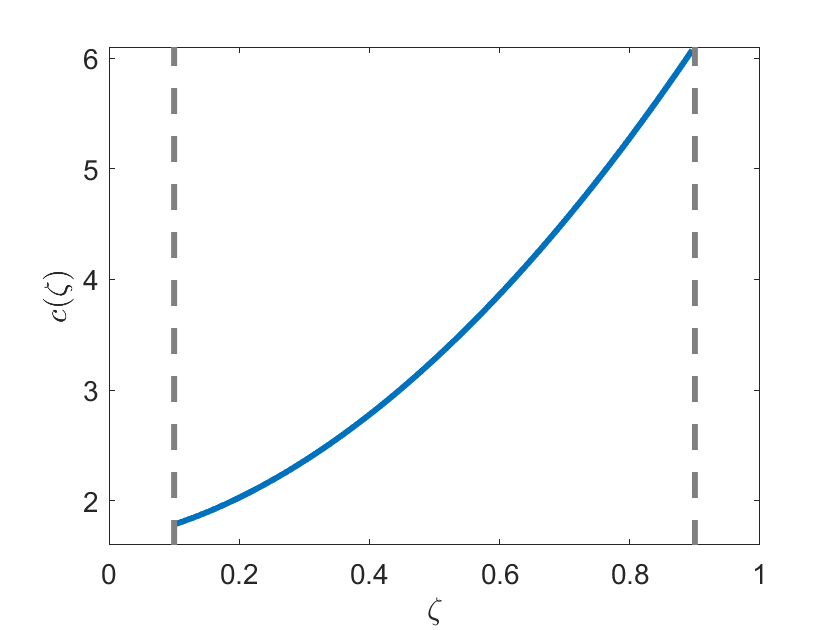}
	\caption{Normalized phase shift $\Delta$ (left), inverse width $\kappa$ (middle), and breather speed $c$
		(right) versus $\zeta \in (\zeta_-, \zeta_+)$ for $k = 0.8$. The band edges $\zeta_-$ and $\zeta_+$ 
		are shown by the vertical dashed lines.}
	\label{sol_prop}
\end{figure}

\begin{figure}[htb!]
	\centering
	\includegraphics[width=\linewidth]{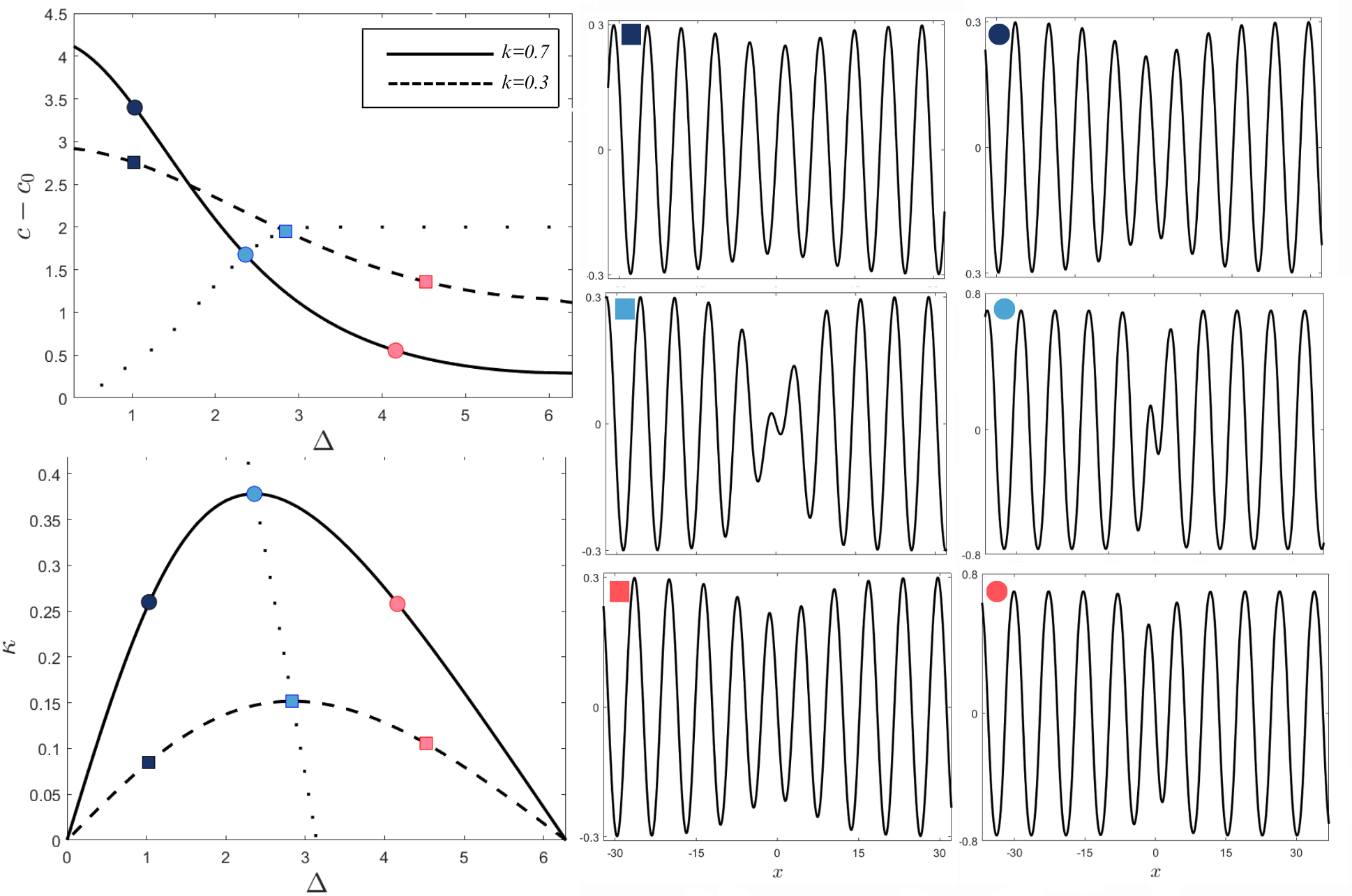}
	\caption{Left: plots of $c-c_0$ and $\kappa$ versus $\Delta$ for several values of $k$. Right: representative dark breather solutions. Representative solutions are marked on
		the left panel with a unique colored symbol. The dotted curve on the left panels corresponds to points of maximal $\kappa$ parameterized by $k$.}
	\label{charac}
\end{figure}

Figure \ref{charac} shows the profiles of the family of breathers for two values of $k$. The periodic wave background is close to a sinusoidal wave for smaller values of $k$ and is close to a kink as $k \to 1$ on each period $[0,4 K]$. The shift parameter $\alpha$ determines the breather localization relative to the periodic wave background. When $\alpha \to 0$ and $\alpha \to K$, the breather represents a slowly modulated wave over many periods since the inverse width parameter $\kappa$ becomes smaller. When $\alpha \to \alpha_{\rm max}$ given by the root of (\ref{transc}),  the breather has the narrowest (strongest) modulation of the cnoidal wave. The dotted curves in the left panels show the graphs of
$$
\{ (\Delta_{\rm max},c_{\rm max}-c_0), \quad  k\in(0,1)\} \quad \mbox{\rm and} \quad \{  (\Delta_{\rm max},\kappa_{\rm max}), \quad  k\in(0,1)\} 
$$
where $\Delta_{\rm max}$, $c_{\rm max}$, and $\kappa_{\rm max}$ are computed at $\alpha = \alpha_{\rm max}$.

\section{Proof of the main results}
\label{sec-4}

The starting point for the proof of Theorem \ref{theorem-1} is the exact solution of the following system of differential equations:
\begin{equation}
\label{ZS-norm}
\left\{ \begin{array}{l} 
p'(x) = i\zeta p(x) + \phi_0(x) q(x), \\
q'(x) = -i\zeta q(x) + \phi_0(x) p(x), 
\end{array} \right.
\end{equation}
where $\phi_0(x) = k \sn(x)$, $x$ stands for $x + c_0t$, and $\varphi = (p,q)^{T}$ gives a solution of the spectral problem (\ref{spectral}) for the normalized wave (\ref{sn_potential}) after the translation. 

We take for granted (see \cite{Takahashi} based on earlier works \cite{Dunke,Shin,Smirnov}) that the system (\ref{ZS-norm}) is satisfied by the following explicit functions
\begin{equation}
\label{p}
p(x;z) = e^{s(z) x} e^{- \frac{i \pi x}{4K}}
\frac{H(x-iz)}{\Theta(x) \Theta(i z)}, \qquad 
q(x;z) = e^{s(z) x} e^{- \frac{i \pi x}{4K}}
\frac{\Theta(x-iz)}{\Theta(x) H(i z)},
\end{equation}
where $s(z)$ is defined by 
\begin{equation}
\label{s-definition}
s(z) = \frac{1}{2} Z(iz) - \frac{1}{2} Z(iz'), \quad z' = K'-z.
\end{equation}
The spectral parameter $z \in \mathbb{C}$ is related to the spectral parameter $\zeta \in \mathbb{R}$ of the linear system (\ref{ZS-norm}) by the characteristic equation
\begin{equation}
\label{zeta-definition}
\zeta(z) = \frac{1}{2}\dn(iz)\dn(iz'). 
\end{equation}
The second linearly independent solution of the system (\ref{ZS-norm}) is obtained from (\ref{p}) by replacing $z$ with $z'$ and vice versa.

\subsection{Lax spectrum for the snoidal potential}

The following result is based on the study of the characteristic equations 
(\ref{s-definition}) and (\ref{zeta-definition}). See Figure \ref{fig:path} 
for illustration of the Lax spectrum and the corresponding values of the shift parameter $z$ in (\ref{zeta-definition}).

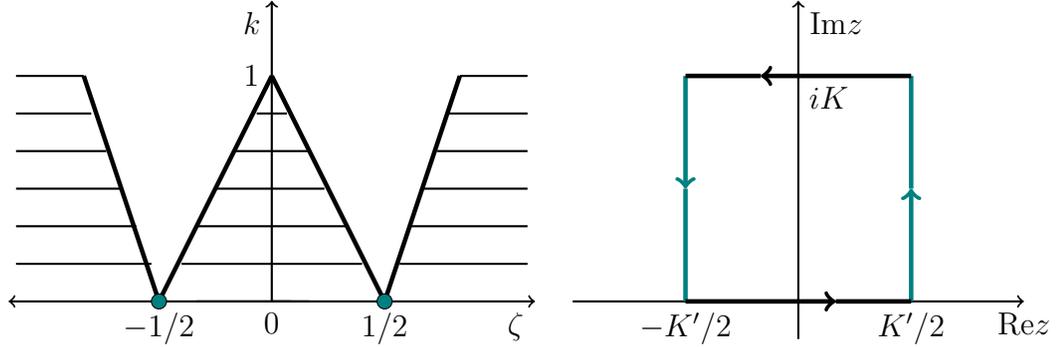
\begin{figure}[h!]
	\centering
	\begin{tikzpicture}
	\draw[thick] (-4.5,0) -- (-2,0) node[anchor=north west] {};
	\draw[thick,->] (-2,0) -- (-0,0) node[anchor=north east] {$\zeta$};
	\draw[thick,->] (-3.5,0) -- (-3.5,4) node[anchor=north east] {$k$};
	\draw[thick,->] (-5,0) -- (-7,0) node[anchor=north east] {};
	\draw[line width=0.6mm,black] (-5,0) -- (-3.5,3) node[anchor=east] {1};
	\draw[thick] (-4.7,0.5) -- (-2.3,0.5) node[anchor=east] {};
	\draw[thick] (-4.5,1) -- (-2.5,1) node[anchor=east] {};
	\draw[thick] (-4.25,1.5) -- (-2.8,1.5) node[anchor=east] {};
	\draw[thick] (-4,2) -- (-3,2) node[anchor=east] {};
	\draw[thick] (-3.7,2.5) -- (-3.3,2.5) node[anchor=east] {};
	\draw[thick] (-1.8,0.5) -- (-0.1,0.5) node[anchor=east] {};
	\draw[thick] (-1.65,1) -- (-0.1,1) node[anchor=east] {};
	\draw[thick] (-1.5,1.5) -- (-0.1,1.5) node[anchor=east] {};
	\draw[thick] (-1.3,2) -- (-0.1,2) node[anchor=east] {};
	\draw[thick] (-1.15,2.5) -- (-0.1,2.5) node[anchor=east] {};
	\draw[thick] (-1,3) -- (-0.1,3) node[anchor=east] {};
	\draw[thick] (-5.1,0.5) -- (-6.9,0.5) node[anchor=east] {};
	\draw[thick] (-5.3,1) -- (-6.9,1) node[anchor=east] {};
	\draw[thick] (-5.5,1.5) -- (-6.9,1.5) node[anchor=east] {};
	\draw[thick] (-5.7,2) -- (-6.9,2) node[anchor=east] {};
	\draw[thick] (-5.9,2.5) -- (-6.9,2.5) node[anchor=east] {};
	\draw[thick] (-6,3) -- (-6.9,3) node[anchor=east] {};
	\draw[line width=0.6mm,black] (-2,0) -- (-3.5,3) node[anchor=north east] {};
	\draw[line width=0.6mm,black] (-2,0) -- (-1,3);
	\draw[line width=0.6mm,black] (-5,0) -- (-6,3);
	\draw[thick] (-3,0) -- (-5,0);
	\draw[thick] (-3.25,0) -- (-3.5,0) node[anchor=north] {$0$};
	\filldraw[fill=teal, draw=black] (-2,0) circle (1mm)node[anchor=north] {$1/2$}; 
	\filldraw[fill=teal, draw=black] (-5,0) circle (1mm) node[anchor=north] {$-1/2$};
	
	\draw[thick,->] (0.5,0) -- (6.5,0) node[anchor=north] {Re$z$};
	\draw[thick,->] (3.5,-0.5) -- (3.5,4) node[anchor=north west] {Im$z$};
	\draw[line width=0.6mm,teal,->] (5,0) -- (5,1.5) node[anchor=north west] {};
	\draw[line width=0.6mm,teal] (5,1.5) -- (5,3) node[anchor=north west] {};
	\draw[line width=0.6mm,teal] (2,0) -- (2,1.5) node[anchor=north west] {};
	\draw[line width=0.6mm,teal,<-] (2,1.5) -- (2,3) node[anchor=north west] {};
	\draw[line width=0.6mm,<-] (3,3) -- (5,3) node[anchor=north west] {};
	\draw[line width=0.6mm] (2,3) -- (3,3) node[anchor=north west] {};
	\draw[line width=0.6mm,->] (2,0) -- (4,0) node[anchor=north west] {};
	\draw[line width=0.6mm] (4,0) -- (5,0) node[anchor=north west] {};
	\draw (3.5,3) node[anchor=north west] {$iK$};
	\draw (5,0) node[anchor=north] {$K'/2$};
	\draw (2,0) node[anchor=north] {$-K'/2$};
	\end{tikzpicture}
	\caption{Left: Lax spectrum for $\phi_0(x) = k \sn(x)$ with $k\in(0,1)$. Right: The path of $z$ in the complex plane for real values of $\zeta$. }
	\label{fig:path}
\end{figure}

\begin{proposition}
	\label{prop-1}
Lax spectrum associated with the snoidal potential $\phi_0$ is located on 
$$
(-\infty,-\zeta_+] \cup [-\zeta_-,\zeta_-] \cup [\zeta_+,\infty)
$$
and the two band gaps are located on $(-\zeta_+,-\zeta_-)$ and $(\zeta_-,\zeta_+)$, where $\zeta_{\pm} := \frac{1}{2} (1 \pm k)$.	
\end{proposition}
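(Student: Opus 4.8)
The plan is to combine the explicit solutions (\ref{p})--(\ref{zeta-definition}) of the linear system (\ref{ZS-norm}) with the criterion that $\zeta\in\mathbb{R}$ lies in the Lax spectrum precisely when the corresponding solutions are bounded on $\mathbb{R}$. For a real $\zeta$ pick $z$ with $z'=K'-z$ and $\zeta=\zeta(z)$ as in (\ref{zeta-definition}); then $\varphi(\cdot;z)$ and $\varphi(\cdot;z')$ span the solution space when $z\neq z'$. Writing $p(x;z)=e^{s(z)x}g_p(x;z)$, $q(x;z)=e^{s(z)x}g_q(x;z)$ with $g_p,g_q$ the theta-quotient factors in (\ref{p}), and using that $\Theta(x)>0$ on $\mathbb{R}$ (first Remark after Theorem \ref{theorem-1}) while $\Theta(iz)$ and $H(iz)$ are nonzero for $\operatorname{Re}z\in(-K'/2,K'/2)$, the functions $|g_p(\cdot;z)|,|g_q(\cdot;z)|$ are continuous, $4K$-periodic and not identically zero. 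Hence $\varphi(\cdot;z)\in L^\infty(\mathbb{R})$ iff $\operatorname{Re}s(z)=0$; and since $(z')'=z$ forces $s(z')=-s(z)$ by (\ref{s-definition}), either both solutions are bounded or no nontrivial solution is. Thus $\zeta\in\mathbb{R}$ belongs to the Lax spectrum iff $\operatorname{Re}s(z)=0$, the four degenerate values $z=z'$ (where $s(z)=0$) being included because the spectrum of the self-adjoint $\mathcal{L}$ is closed; so it suffices to locate the open bands and open gaps.

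Next I would compute the image of $\partial R$ under $\zeta(\cdot)$, where $R=[-K'/2,K'/2]\times[0,K]$ is the rectangle of Figure \ref{fig:path}. Viewed as a function of $z$, $\zeta(z)=\tfrac12\dn(iz)\dn(iz')$ is elliptic with periods $4K'$ and $2iK$ and satisfies $\zeta(\bar z)=\overline{\zeta(z)}$ and $\zeta(K'-z)=\zeta(z)$, which reduces the analysis to $z\in\partial R$. Using the reductions $\dn(iK'/2;k)=\sqrt{1+k}$, $\dn(K-u;k)=k'/\dn(u;k)$, $\dn(u+2iK';k)=-\dn(u;k)$, the simple pole of $\dn$ at $iK'$, and monotonicity along each edge, I would show: the edge $\operatorname{Im}z=0$ is mapped onto $(-\infty,-\zeta_+]\cup[\zeta_+,\infty)$, with the pole of $\zeta$ at $z=0$ producing the unbounded branches and $\pm\zeta_+$ attained at $z=\pm K'/2$; the edge $\operatorname{Im}z=K$ is mapped onto $[-\zeta_-,\zeta_-]$, through $\zeta=0$ at $z=iK$ and $\pm\zeta_-$ at $z=\pm K'/2+iK$; and the vertical edges $\operatorname{Re}z=\pm K'/2$ are mapped monotonically onto $(\zeta_-,\zeta_+)$ and $(-\zeta_+,-\zeta_-)$. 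A count of preimages (the real locus of the underlying elliptic curve) then shows that $\zeta$ takes real values on $R$ only along $\partial R$, so there are no further candidates.

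Finally I would evaluate $\operatorname{Re}s(z)$ on the edges from (\ref{s-definition}), using the oddness $Z(-u)=-Z(u)$, the reflection $\overline{Z(u;k)}=Z(\bar u;k)$, and the quasi-periodicity $Z(u+2iK')=Z(u)-i\pi/K$ of the Jacobi zeta function. On the two horizontal edges, $Z(iz)$ and $Z(iz')$ are purely imaginary, so $s(z)$ is purely imaginary and $\operatorname{Re}s(z)=0$; hence the bands $(-\infty,-\zeta_+]$, $[-\zeta_-,\zeta_-]$, $[\zeta_+,\infty)$ all lie in the Lax spectrum. On the vertical edge $z=K'/2+ib$, the same symmetries give $s(z)=-\operatorname{Re}Z(b+iK'/2)\in\mathbb{R}$, and a short computation via the imaginary-argument transformations --- matching $b\in(0,K)$ to the parameter $\alpha\in(0,K)$ of (\ref{char}) --- identifies $|s(z)|$ with the quantity $\kappa$ of (\ref{kappa}); since $Z(\alpha)>0$ and $\sn(\alpha),\cn(\alpha),\dn(\alpha)>0$ on $(0,K)$, one gets $|s(z)|=\kappa>0$ throughout the open gap (and $=0$ only at the edges). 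The mirror edge is identical. Hence no $\zeta$ in $(-\zeta_+,-\zeta_-)$ or $(\zeta_-,\zeta_+)$ has a bounded solution, whereas every other real $\zeta$ does; taking closures yields the stated spectrum and the two band gaps.

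I expect the main obstacle to be the step of pinning down the image of $\partial R$: proving that the four edges cover $\mathbb{R}\cup\{\infty\}$ monotonically in exactly the claimed pattern and that the real-$\zeta$ locus meets $R$ only in $\partial R$. This is careful bookkeeping with periods, poles and the handful of special values of the Jacobi functions rather than any new idea; granting it, the separation of gaps from bands comes down to the elementary positivity $\kappa>0$ on $(0,K)$.
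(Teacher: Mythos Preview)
Your proposal is correct and follows essentially the same route as the paper: parametrize real $\zeta$ by the boundary of the rectangle $R=[-K'/2,K'/2]\times[0,K]$ via (\ref{zeta-definition}), then decide bands versus gaps by checking whether $s(z)$ in (\ref{s-definition}) is purely imaginary (horizontal edges) or real and nonzero (vertical edges). You supply more justification than the paper does---the explicit boundedness criterion, the argument that the real-$\zeta$ locus meets $R$ only in $\partial R$, and the identification $|s|=\kappa>0$ on the gap via (\ref{kappa})---but the skeleton is the same.
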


\begin{proof}
	We have $\zeta(z) \in \R$ if either $z \in \mathbb{R} + i m K$ or
	$z \in i \mathbb{R} + \frac{1}{2} (2m+1) K'$, where $m \in \mathbb{Z}$. The former follows from the characteristic equation (\ref{zeta-definition}), the translation formulas (\ref{appendix-4}), and the reflection formulas (\ref{appendix-1}). The latter follows from the characteristic equation (\ref{zeta-definition}), the translation formulas (\ref{appendix-2}), 
	and the addition formulas (\ref{appendix-quarter-period}).

When $z$ traverses along the rectange in the complex plane shown in Figure \ref{fig:path} (right), the values of $\zeta$ in (\ref{zeta-definition}) change from $+\infty$ to $-\infty$, where $\zeta = \pm \infty$ corresponds to $z = 0$. The four corner points of the rectange in the $z$ plane correspond to 
\begin{align*}
\pm \zeta_+ := \zeta\left(\pm \frac{1}{2} K'\right) = \pm \frac{1}{2} (1+k) \quad 
\mbox{\rm and} \quad 
\pm \zeta_- := \zeta\left(\pm \frac{1}{2} K' + iK \right) = \pm \frac{1}{2} (1-k).
\end{align*}
The values of $s(z)$ in (\ref{s-definition}) are purely imaginary if $z \in \mathbb{R} + i m K$ and purely real if $z \in i \mathbb{R} + \frac{1}{2} (2m+1) K'$ for an integer $m$. Lax spectrum of the spectral problem 
(\ref{ZS-norm}) is defined by bounded solutions (\ref{p}) in $x$ 
which only exist if $s(z) \in i \mathbb{R}$. Thus, Lax spectrum corresponds to 
\begin{itemize}
	\item $[\zeta_+,\infty)$ for $z \in (0,\frac{1}{2} K']$, 
	\item $[-\zeta_-,\zeta_-]$ for $z \in [-\frac{1}{2} K',\frac{1}{2} K'] + i K$, 
	\item $(-\infty,-\zeta_+]$ for $z \in [-\frac{1}{2} K',0)$
\end{itemize} 
with two (symmetric) band gaps $(-\zeta_+,-\zeta_-)$ and $(\zeta_-,\zeta_+)$
for $z \in \pm \frac{1}{2} K' + i [0,K]$.
\end{proof}

\begin{remark}
Figure \ref{fig:path} (left) shows the Lax spectrum described in Proposition \ref{prop-1} for different values of $k \in (0,1)$. As $k \to 0$, the Lax spectrum transforms to $(-\infty,\infty)$. 
As $k \to 1$, the Lax spectrum transforms to $(-\infty,-1] \cup \{0\} \cup [1,\infty)$, where $0$ is the isolated eigenvalue of the eigenvalue problem (\ref{eigen}) for the black soliton \cite{ZS}. 
\end{remark}

\subsection{Parameterization in the spectral gap $(\zeta_-,\zeta_+)$}

The spectral gap $(\zeta_-,\zeta_+)$ corresponds to the vertical segment with 
${\rm Re}(z) = {\rm Re}(z') = \frac{1}{2} K'$, for which it is natural 
to parameterize $z$ by using 
\begin{equation}\label{z}
z = \frac{1}{2}K' + i\alpha, \quad \alpha\in[0,K].
\end{equation}
Since the dark breathers on the snoidal background are constructed by 
using the one-fold Darboux transformation (\ref{DT}) with $\zeta$ selected in the band gap, we shall give the explicit expressions for $s(z)$ and $\zeta(z)$
in (\ref{s-definition}) and (\ref{zeta-definition}) by using (\ref{z}).

\begin{proposition}
	\label{prop-2}
Let $z$ be given by the parameterization (\ref{z}). Then we have 
\begin{equation}
\label{char-prop}
\zeta =\frac{1+k}{2} \; \frac{1-k\sn^2(\alpha)}{1+k\sn^2(\alpha)}
\end{equation}
and
\begin{equation}
\label{s-expression}
s = -Z(\alpha) - \frac{k\sn(\alpha)\cn(\alpha)\dn(\alpha)}{1+k\sn^2(\alpha)}.
\end{equation}
\end{proposition}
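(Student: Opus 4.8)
The plan is to substitute the parameterization (\ref{z}) directly into the defining relations (\ref{zeta-definition}) and (\ref{s-definition}) and to reduce every quantity to real Jacobi elliptic functions of $\alpha$ using standard addition and half-period identities (as collected in Appendix \ref{sec-app} and in \cite{Lawden}). With $z = \tfrac12 K' + i\alpha$ one has $z' = K' - z = \tfrac12 K' - i\alpha$, hence $iz = \tfrac12 iK' - \alpha$ and $iz' = \tfrac12 iK' + \alpha$, so that both $\zeta$ and $s$ become expressions symmetric under $\alpha \mapsto -\alpha$ built from the two points $\tfrac12 iK' \mp \alpha$. The crucial auxiliary input is the set of values at the purely imaginary half-period,
$$
\sn\!\left(\tfrac12 iK'\right) = \frac{i}{\sqrt{k}}, \qquad \cn\!\left(\tfrac12 iK'\right) = \sqrt{\frac{1+k}{k}}, \qquad \dn\!\left(\tfrac12 iK'\right) = \sqrt{1+k},
$$
which follow from the quarter-period relations of Appendix \ref{sec-app}; in particular $\sn^2\!\left(\tfrac12 iK'\right) = -1/k$ is exactly what turns a denominator $1 - k^2 \sn^2\!\left(\tfrac12 iK'\right)\sn^2(\alpha)$ into $1 + k\sn^2(\alpha)$, producing the denominators in (\ref{char-prop}) and (\ref{s-expression}).

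For $\zeta$, I would multiply the two addition formulas for $\dn(u \pm v)$ to obtain
$$
\dn(u+v)\,\dn(u-v) = \frac{\dn^2 u\,\dn^2 v - k^4 \sn^2 u\,\cn^2 u\,\sn^2 v\,\cn^2 v}{\bigl(1 - k^2 \sn^2 u\,\sn^2 v\bigr)^2},
$$
evaluated at $u = \tfrac12 iK'$, $v = \alpha$. The half-period values collapse the numerator to $(1+k)\bigl[\dn^2(\alpha) + k^2 \sn^2(\alpha)\cn^2(\alpha)\bigr]$ and the denominator to $\bigl(1 + k\sn^2(\alpha)\bigr)^2$; the fundamental relations (\ref{fund-rel}) give $\dn^2(\alpha) + k^2\sn^2(\alpha)\cn^2(\alpha) = 1 - k^2\sn^4(\alpha) = \bigl(1-k\sn^2(\alpha)\bigr)\bigl(1+k\sn^2(\alpha)\bigr)$, and after one cancellation and the factor $\tfrac12$ from (\ref{zeta-definition}) this is precisely (\ref{char-prop}).

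For $s$, I would apply the addition theorem for the Jacobi zeta function, $Z(u+v) = Z(u) + Z(v) - k^2 \sn u\,\sn v\,\sn(u+v)$, together with oddness $Z(-v) = -Z(v)$, to both $Z\!\left(\tfrac12 iK' \pm \alpha\right)$. In the difference $\tfrac12 Z(iz) - \tfrac12 Z(iz')$ the terms $Z\!\left(\tfrac12 iK'\right)$ cancel (so its value is never needed), leaving
$$
s = -Z(\alpha) + \tfrac12 k^2 \sn\!\left(\tfrac12 iK'\right)\sn(\alpha)\left[\sn\!\left(\tfrac12 iK' - \alpha\right) + \sn\!\left(\tfrac12 iK' + \alpha\right)\right].
$$
The bracket is evaluated via $\sn(u-v) + \sn(u+v) = \dfrac{2\,\sn u\,\cn v\,\dn v}{1 - k^2 \sn^2 u\,\sn^2 v}$, which at $u = \tfrac12 iK'$, $v = \alpha$ equals $\dfrac{2i\,\cn(\alpha)\dn(\alpha)}{\sqrt{k}\,\bigl(1 + k\sn^2(\alpha)\bigr)}$; substituting and using $\sn\!\left(\tfrac12 iK'\right) = i/\sqrt{k}$, the factors $i\cdot i = -1$ and $k^2/k = k$ combine to give exactly (\ref{s-expression}).

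The derivation is essentially bookkeeping with known identities, so the one point to be careful about is the branch of the elliptic functions at the imaginary half-period $\tfrac12 iK'$: one must track the single factor $i$ from $\sn\!\left(\tfrac12 iK'\right)$ through both computations and check that it enters quadratically — in $\zeta$ through the product $\dn(u+v)\dn(u-v)$, and in $s$ through $\sn\!\left(\tfrac12 iK'\right)$ times the bracket, which is itself proportional to $i$ — so that $\zeta$ and $s$ come out manifestly real for $\alpha \in [0,K]$, consistently with the reality of $\zeta$ on the gap $(\zeta_-,\zeta_+)$ established in Proposition \ref{prop-1}.
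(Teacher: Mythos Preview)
Your proposal is correct and follows essentially the same route as the paper: both substitute $z=\tfrac12 K'+i\alpha$ into (\ref{zeta-definition}) and (\ref{s-definition}), reduce $\dn(\tfrac{iK'}{2}\pm\alpha)$ and $Z(\tfrac{iK'}{2}\pm\alpha)$ via the addition formulas and the quarter-period values, and arrive at the same intermediate expressions (the paper writes the $s$-step as $-Z(\alpha)+\tfrac12 k^2\sn(\tfrac{iK'}{2})\sn(\alpha)[\sn(\tfrac{iK'}{2}+\alpha)+\sn(\tfrac{iK'}{2}-\alpha)]$, exactly as you do). Your extra remark on why the factors of $i$ pair up to give real $\zeta$ and $s$ is a nice sanity check that the paper leaves implicit.
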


\begin{proof}
	By using (\ref{zeta-definition}),  (\ref{z}), and (\ref{appendix-quarter-period}), we obtain
	\begin{align*}
	\zeta &= \frac{1}{2} \dn\left(\frac{i}{2}K'-\alpha \right) 
	\dn\left(\frac{i}{2}K'+\alpha \right) \\
	&= \frac{1+k}{2} \; \frac{\dn^2(\alpha) + k^2 \sn^2(\alpha) \cn^2(\alpha)}{[1 + k \sn^2(\alpha)]^2},
	\end{align*}
which yields (\ref{char-prop}). 

By using (\ref{s-definition}), (\ref{z}),  (\ref{appendix-quarter-period}), and (\ref{appendix-6}), we obtain
\begin{align*}
s &= \frac{1}{2} Z\left(\frac{iK'}{2} - \alpha \right) - \frac{1}{2} Z\left(\frac{iK'}{2} + \alpha\right) \\
&= -Z(\alpha) + \frac{1}{2} k^2 \sn\left(\frac{iK'}{2}\right) \sn(\alpha)\left[ \sn\left(\frac{iK'}{2} + \alpha \right)+ \sn\left(\frac{iK'}{2} - \alpha \right) \right], 
\end{align*}
which yields (\ref{s-expression}).
\end{proof}

\begin{remark}
	In agreement with Proposition \ref{prop-1}, it follows from (\ref{char-prop}) that $\zeta = \zeta_+$ at $\alpha=0$ and $\zeta = \zeta_-$ at $\alpha=K$, whereas it follows from (\ref{s-expression}) that  the exponent $e^{s(z) x}$ in (\ref{p}) is purely real for $\alpha \in (0,K)$.
\end{remark}

\subsection{Time evolution of eigenfunctions in the spectral gap}

The time evolution of eigenfunctions (\ref{p}) along the linear flow (\ref{time_evol}) is obtained by changing $x$ to $x + c_0t$ and by multiplication of the eigenfunctions 
by $e^{\omega(z) t}$ with $\omega(z)$ to be determined:
\begin{equation}
\label{time-evol}
\varphi(x,t) = e^{\omega(z) t} \left[ \begin{array}{l} p(x+c_0t;z) \\ q(x+c_0t;z) \end{array} \right].
\end{equation}
Substituting (\ref{time-evol}) into (\ref{time_evol}) 
and using (\ref{ZS-norm}) yields the algebraic system: 
\begin{equation}
\label{ZS-time}
\left\{ \begin{array}{l} 
\omega p = i\zeta (4 \zeta^2 + 2 \phi_0^2 - 1 - k^2) p + 
(4 \zeta^2 \phi_0 - 2 i \zeta \phi_0' + 2 \phi_0^3 - \phi_0'' - (1+k^2) \phi_0) q, \\
\omega q = (4\zeta^2 \phi_0 + 2i \zeta \phi_0' + 2 \phi_0^3 - \phi_0'' - (1+k^2) \phi_0) p - i\zeta (4 \zeta^2 + 2 \phi_0^2 - 1 - k^2) q,
\end{array} \right.
\end{equation}
where $\phi_0(x) = k \sn(x)$. 

\begin{proposition}
	\label{prop-3}
	The value of $\omega$ in the system (\ref{ZS-time}) is determined by $\zeta \in \mathbb{R}$ from 
\begin{equation}
\label{omega-definition}
\omega^2 = -16 \zeta^2 P(\zeta), \quad P(\zeta) := \zeta^4 - \frac{1}{2} (1+k^2)\zeta^2+\frac{1}{16}(1-k^2)^2,
\end{equation}
with $P(\zeta) = (\zeta^2 - \zeta_+^2) (\zeta^2 - \zeta_-^2)$ with 
$\zeta_{\pm} = \frac{1}{2} (1 \pm k)$.	
\end{proposition}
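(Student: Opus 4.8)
The plan is to read the algebraic system (\ref{ZS-time}) as the eigenvalue relation $M\varphi = \omega\varphi$ for the trace-free coefficient matrix
$$
M = \begin{pmatrix} i\zeta\,(4\zeta^2 + 2\phi_0^2 - 1 - k^2) & B \\[1mm] C & -\,i\zeta\,(4\zeta^2 + 2\phi_0^2 - 1 - k^2) \end{pmatrix},
$$
where $B := 4\zeta^2\phi_0 - 2i\zeta\phi_0' + 2\phi_0^3 - \phi_0'' - (1+k^2)\phi_0$ and $C := 4\zeta^2\phi_0 + 2i\zeta\phi_0' + 2\phi_0^3 - \phi_0'' - (1+k^2)\phi_0$. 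Since $\varphi = (p,q)^T$ is nonzero, $\omega$ must be an eigenvalue of $M$, so $\det(M - \omega I) = 0$, and since $\mathrm{tr}\, M = 0$ this is simply $\omega^2 = -\det M$. I would then compute $-\det M = BC - \zeta^2(4\zeta^2 + 2\phi_0^2 - 1 - k^2)^2$, and, writing $B = R - 2i\zeta\phi_0'$, $C = R + 2i\zeta\phi_0'$ with $R := 4\zeta^2\phi_0 + 2\phi_0^3 - \phi_0'' - (1+k^2)\phi_0$, use $BC = R^2 + 4\zeta^2(\phi_0')^2$.

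The decisive input is the profile equation satisfied by $\phi_0(x) = k\sn(x)$. Because $u(x,t) = \phi_0(x + c_0 t)$ solves (\ref{1}) with $c_0 = 1 + k^2$, integrating the traveling-wave reduction once gives $\phi_0'' = 2\phi_0^3 - (1+k^2)\phi_0$ (equivalently the Jacobi identity $\sn'' = 2k^2\sn^3 - (1+k^2)\sn$), and multiplying this by $\phi_0'$ and integrating once more gives the first integral $(\phi_0')^2 = \phi_0^4 - (1+k^2)\phi_0^2 + k^2$. The first identity collapses $R$ to $R = 4\zeta^2\phi_0$; substituting $R^2 = 16\zeta^4\phi_0^2$ together with the formula for $(\phi_0')^2$ into $\omega^2 = R^2 + 4\zeta^2(\phi_0')^2 - \zeta^2(4\zeta^2 + 2\phi_0^2 - 1 - k^2)^2$, I would expand and check that every term proportional to $\phi_0^2$ and to $\phi_0^4$ cancels identically, leaving
$$
\omega^2 = -16\zeta^6 + 8(1+k^2)\zeta^4 - \bigl[(1+k^2)^2 - 4k^2\bigr]\zeta^2 = -16\zeta^2\Bigl( \zeta^4 - \tfrac12(1+k^2)\zeta^2 + \tfrac{1}{16}(1-k^2)^2 \Bigr),
$$
using $(1+k^2)^2 - 4k^2 = (1-k^2)^2$. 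This is exactly $\omega^2 = -16\zeta^2 P(\zeta)$.

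To finish, I would factor $P$ as a quadratic in $\zeta^2$: its roots are $\zeta^2 = \tfrac14\bigl[(1+k^2) \pm \sqrt{(1+k^2)^2 - (1-k^2)^2}\bigr] = \tfrac14\bigl(1+k^2 \pm 2k\bigr) = \tfrac14(1 \pm k)^2 = \zeta_\pm^2$, so that $P(\zeta) = (\zeta^2 - \zeta_+^2)(\zeta^2 - \zeta_-^2)$ with $\zeta_\pm = \tfrac12(1 \pm k)$, as claimed.

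The only step demanding real attention is the cancellation of every $x$-dependent term in the expansion of $\omega^2$: this is precisely what makes $\omega$ a genuine constant along the flow (the pointwise shadow of Lax compatibility), and it works only because the correct form of the profile ODE and its first integral are used. That bookkeeping is the bulk of the computation, but it is routine once $R$ has been reduced to $4\zeta^2\phi_0$. I would also note that the proposition asserts only the relation $\omega^2 = -16\zeta^2 P(\zeta)$; the choice of branch $\omega = \pm 4\zeta\sqrt{-P(\zeta)}$ consistent with the explicit eigenfunctions (\ref{p}) is a separate matter, to be pinned down later when those eigenfunctions enter the Darboux transformation.
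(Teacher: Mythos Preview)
Your proposal is correct and follows essentially the same approach as the paper: both compute the determinant of the trace-free coefficient matrix in (\ref{ZS-time}), then invoke the second-order profile equation $\phi_0'' = 2\phi_0^3 - (1+k^2)\phi_0$ and its first integral $(\phi_0')^2 = \phi_0^4 - (1+k^2)\phi_0^2 + k^2$ to eliminate the $x$-dependence and arrive at (\ref{omega-definition}). Your write-up is a bit more explicit in organizing the computation (isolating $R$, noting $BC = R^2 + 4\zeta^2(\phi_0')^2$, and carrying out the factorization of $P$), but the underlying argument is the same.
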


\begin{proof}
Since  (\ref{ZS-time}) is a linear algebraic system, we obtain 
the values of $\omega$ from the determinant equation
\begin{align*}
\det &\begin{pmatrix}
i\zeta(4\zeta^2+2\phi_0^2-1-k^2)-\omega& 4 \zeta^2 \phi_0-2i\zeta \phi_0'+2\phi_0^3-\phi_0''-(1+k^2)\phi_0 \\ 
4\zeta^2\phi_0+2i\zeta\phi_0'+2\phi_0^3-\phi_0''-(1+k^2)\phi_0 &   -i\zeta(4\zeta^2+2\phi_0^2-1-k^2)-\omega
\end{pmatrix}\\
&= \omega^2 +  \zeta^2(4\zeta^2+2\phi_0^2-1-k^2)^2-(4\zeta^2\phi_0-2i\zeta\phi_0'+2\phi_0^3-\phi_0''-(1+k^2)\phi_0)\\
& \quad \times(4\zeta^2\phi_0+2i\zeta \phi_0'+2\phi_0^3-\phi_0''-(1+k^2)\phi_0) = 0.
\end{align*}
The profile $\phi_0(x) = k \sn(x)$ satisfies the second-order differential equation:
\begin{equation}
\label{second}
\phi_0'' - 2 \phi_0^3 + c_0 \phi_0 = 0, \quad c_0 = 1 + k^2.
\end{equation}
Integration of (\ref{second}) yields the first-order invariant 
\begin{equation}
\label{third}
( \phi_0')^2 - \phi_0^4 + c_0 \phi_0^2 = d_0, \quad d_0 = k^2.
\end{equation}
Using (\ref{second}) and (\ref{third}) in the determinant equation yields 
(\ref{omega-definition}).	
\end{proof}

The next result gives the explicit expression for $\omega(z)$ 
for $\zeta$ in the band gap $(\zeta_-,\zeta_+)$ by using (\ref{z}) and (\ref{omega-definition}).

\begin{proposition}
	\label{prop-4}
	Let $z$ be given by the parameterization (\ref{z}). Then we have 
\begin{equation}
\label{omega-expression}
\omega = -2 k (1+k)^2 \frac{1 - k \sn^2(\alpha)}{[1+k\sn^2(\alpha)]^3} \sn(\alpha)\cn(\alpha) \dn(\alpha).
\end{equation}
\end{proposition}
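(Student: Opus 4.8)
The plan is to obtain the closed form of $\omega$ by specializing the relation $\omega^2 = -16\zeta^2 P(\zeta)$ from Proposition~\ref{prop-3}, with $P(\zeta)=(\zeta^2-\zeta_+^2)(\zeta^2-\zeta_-^2)$, to the gap parametrization $z=\tfrac12 K'+i\alpha$ of (\ref{z}), using the value of $\zeta$ supplied by Proposition~\ref{prop-2}; this determines $|\omega|$, and the sign of the square root is then pinned down by returning to the explicit eigenfunctions (\ref{p}) and the time flow (\ref{time_evol}).

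For the magnitude, I would substitute $\zeta=\frac{1+k}{2}\,\frac{1-k\sn^2(\alpha)}{1+k\sn^2(\alpha)}$ from (\ref{char-prop}) into each factor of $P$. Because $(1-k\sn^2(\alpha))^2-(1+k\sn^2(\alpha))^2=-4k\sn^2(\alpha)$, the first factor collapses to $\zeta^2-\zeta_+^2=-k(1+k)^2\sn^2(\alpha)(1+k\sn^2(\alpha))^{-2}$; expanding $(1+k)^2(1-k\sn^2(\alpha))^2-(1-k)^2(1+k\sn^2(\alpha))^2$ and using the fundamental relations (\ref{fund-rel}) in the form $(1-\sn^2(\alpha))(1-k^2\sn^2(\alpha))=1-(1+k^2)\sn^2(\alpha)+k^2\sn^4(\alpha)$ gives $\zeta^2-\zeta_-^2=k\cn^2(\alpha)\dn^2(\alpha)(1+k\sn^2(\alpha))^{-2}$. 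Combining these with $\zeta^2=\tfrac14(1+k)^2(1-k\sn^2(\alpha))^2(1+k\sn^2(\alpha))^{-2}$ and simplifying yields
\[
\omega^2=\frac{4k^2(1+k)^4\,(1-k\sn^2(\alpha))^2\,\sn^2(\alpha)\cn^2(\alpha)\dn^2(\alpha)}{(1+k\sn^2(\alpha))^6}.
\]
Since $\alpha\in(0,K)$ forces $\sn(\alpha),\cn(\alpha),\dn(\alpha)>0$ and $k\sn^2(\alpha)<k<1$, the nonnegative square root of the right-hand side is exactly the modulus of the right-hand side of (\ref{omega-expression}); in particular $\omega$ is real throughout the gap, as it must be since the eigenfunctions are purely real-exponential there.

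The delicate part, and the one I expect to be the main obstacle, is to show that the correct sign is the minus sign, since the characteristic polynomial only delivers $\omega^2$. Here I would use the eigenvector rather than the determinant: $\omega$ must be the eigenvalue of the matrix in (\ref{time_evol}) that corresponds to the specific solution $(p,q)$ of (\ref{p}). Substituting the ansatz (\ref{time-evol}) into the first row of (\ref{time_evol}) and using (\ref{ZS-norm}) reproduces the first equation of (\ref{ZS-time}); the combination $2\phi_0^3-\phi_0''-c_0\phi_0$ vanishes identically by (\ref{second}), leaving
\[
\omega=i\zeta\bigl(4\zeta^2+2\phi_0^2-c_0\bigr)+\bigl(4\zeta^2\phi_0-2i\zeta\phi_0'\bigr)\frac{q}{p}.
\]
By (\ref{sn-theta-quotient}) the ratio $q/p$ of the theta quotients in (\ref{p}) equals $\bigl[k\,\sn(x-iz)\,\sn(iz)\bigr]^{-1}$, while $\phi_0=k\sn(x)$ and $\phi_0'=k\cn(x)\dn(x)$. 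Since the left-hand side is real and independent of $x$, I would evaluate the right-hand side at $x=K$, where $\phi_0'(K)=0$, $\phi_0(K)=k$ and $\sn(K-iz)=\cn(iz)/\dn(iz)$, so that $\omega=i\zeta(4\zeta^2+2k^2-c_0)+4\zeta^2\dn(iz)/[\cn(iz)\sn(iz)]$, and then compute $\sn(iz),\cn(iz),\dn(iz)$ at $iz=-\alpha+\tfrac{i}{2}K'$ by the quarter-period addition formulas (\ref{appendix-quarter-period}) (using $\sn(\tfrac{i}{2}K')=i/\sqrt{k}$ and the companion values of $\cn$ and $\dn$), exactly as in the proof of Proposition~\ref{prop-2}; the imaginary parts cancel and the real part reproduces $-|\omega|$, that is, (\ref{omega-expression}). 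A lighter variant that still settles the sign: by the magnitude formula $\omega(\alpha)$ is continuous and nonvanishing on $(0,K)$, so its sign is constant there, and a first-order expansion in $\alpha$ of the reduced relation above as $\alpha\to 0^+$ (where $\sn(\alpha)\sim\alpha$) gives $\omega\sim-2k(1+k)^2\alpha<0$.

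As a consistency check, the resulting $\omega$ together with $s=-\kappa$ from Proposition~\ref{prop-2} and (\ref{kappa}) produces the breather velocity $c=c_0-\omega/\kappa$, which both reproduces (\ref{speed}) and is $>c_0$ precisely because $\omega<0$; this is only a check and is not used in the proof, so there is no circularity with Theorem~\ref{theorem-1}. Apart from this sign, the whole statement is a mechanical reduction from Propositions~\ref{prop-2} and \ref{prop-3} together with the elliptic identities (\ref{fund-rel}).
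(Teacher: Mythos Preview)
Your argument is correct. The computation of $\omega^2$ from Proposition~\ref{prop-3} and (\ref{char-prop}) is clean and complete, and your two proposed ways of fixing the sign---evaluating the first row of (\ref{ZS-time}) at a convenient point, or using continuity in $\alpha$ plus a small-$\alpha$ expansion---are both sound.

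The paper takes a slightly different route: rather than first extracting $|\omega|$ from the characteristic polynomial and then separately pinning down the sign, it works directly with the first equation of (\ref{ZS-time}) evaluated at $x=0$ (where $\phi_0=0$, $\phi_0'=k$), computes $q(0;z)/p(0;z)=-\Theta^2(\alpha-\tfrac{iK'}{2})/H^2(\alpha-\tfrac{iK'}{2})$ from (\ref{p}) and (\ref{sn-theta-quotient}), and simplifies via (\ref{appendix-quarter-period}) and (\ref{appendix-extra}) to get (\ref{omega-expression}) with its sign in one stroke. Your choice of $x=K$ works just as well but trades $\phi_0'=0$ against a slightly bulkier $\sn(K-iz)=\cn(iz)/\dn(iz)$. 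The paper's Remark following the proof in fact records your organization---``negative square root of (\ref{omega-definition}) after (\ref{char})''---as an alternative, so the two approaches are really the same idea with the order of magnitude/sign reversed; yours makes the role of Proposition~\ref{prop-3} explicit, while the paper's is a bit more economical.
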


\begin{proof}
Compatibility of (\ref{ZS-norm}) and (\ref{ZS-time}) implies that the expression for $\omega$ can be computed from the first algebraic equation of system (\ref{ZS-time}) at a single value of $x$, e.g. at $x = 0$. By doing so, we obtain 
\begin{align*}
\omega + i \zeta (1+k^2) - 4 i \zeta^3 &= -2i\zeta k \frac{q(0;z)}{p(0;z)} \\
&= 2i \zeta k \frac{\Theta^2\left(\alpha - \frac{iK'}{2}\right)}{
	H^2\left(\alpha - \frac{i K'}{2}\right)} \\
&= 2i \zeta k \frac{(1+k) \sn(\alpha) + i \cn(\alpha) \dn(\alpha)}{(1+k) \sn(\alpha) - i \cn(\alpha) \dn(\alpha)},
\end{align*}
where we have used (\ref{sn-theta-quotient}),  (\ref{p}), (\ref{z}), and (\ref{appendix-quarter-period}). Expressing now $\zeta$ in terms of $\alpha$ by using (\ref{char-prop}) in the band gap $(\zeta_-,\zeta_+)$ 
and using  (\ref{appendix-extra}), we obtain 
\begin{align*}
\omega &= i \zeta \left[ 4 \zeta^2 - 1 - k^2 + 2k \frac{[(1+k) \sn(\alpha) + i \cn(\alpha) \dn(\alpha)]^2}{[1 + k \sn^2(\alpha)]^2} \right] \\
&= - 4 \zeta k (1+k) \frac{\sn(\alpha) \cn(\alpha) \dn(\alpha)}{[1 + k \sn^2(\alpha)]^2}, 
\end{align*}
which yields (\ref{omega-expression}). 
\end{proof}

\begin{remark}
	It follows from (\ref{omega-expression}) that the exponent $e^{\omega(z) t}$ in (\ref{time-evol}) is purely real for $\alpha \in (0,K)$. The values of $s$ in (\ref{s-expression}) and $\omega$ in (\ref{omega-expression}) are strictly negative for all $\alpha \in (0,K)$.
\end{remark}

\begin{remark}
The expression (\ref{omega-expression}) can be obtained by taking the negative square root from the expression (\ref{omega-definition}) after $\zeta$ is expressed by (\ref{char}).
\end{remark}

\subsection{Quarter-period translation of the Jacobi's theta functions}

The relevance of the quarter-period translations of Jacobi's theta functions follows from the representations (\ref{p}) with (\ref{z}):
\begin{equation*}
H(x-iz) = H\left(x+\alpha-\frac{i K'}{2}\right), \quad 
\Theta(x-iz) = \Theta\left(x+\alpha-\frac{i K'}{2}\right).
\end{equation*}
The following proposition specifies some useful quarter-period translation formulas of the Jacobi's theta functions, which are novel 
to the best of our knowledge.

\begin{proposition}
	\label{prop-5}
We have for every $x \in \mathbb{R}$:	
\begin{equation}
	\label{squared-rel-2}
\frac{H^2\left(x + \frac{i K'}{2}\right) \Theta^2\left(\frac{i K'}{2}\right)}{\Theta^2(x) \Theta^2(0)} = \frac{i \sqrt{k}}{2(1+k)} e^{\frac{\pi K'}{4K}}e^{-\frac{i \pi x}{2K}} \left[ (1+k) \sn(x) + i \cn(x) \dn(x) \right]
	\end{equation}
	and
\begin{equation}
\label{squared-rel-1}
\frac{\Theta^2\left(x + \frac{iK'}{2}\right) \Theta^2\left(\frac{i K'}{2}\right)}{\Theta^2(x) \Theta^2(0)} = \frac{i \sqrt{k}}{2(1+k)} e^{\frac{\pi K'}{4K}}e^{-\frac{i \pi x}{2K}} \left[ (1+k) \sn(x) - i \cn(x) \dn(x) \right].
\end{equation}
\end{proposition}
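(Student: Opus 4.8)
\textbf{Proof proposal for Proposition \ref{prop-5}.}

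The plan is to establish both identities simultaneously by exploiting the structure of the quarter-period translation $x \mapsto x + \frac{iK'}{2}$, which is the ``half'' of the familiar half-period translation $x \mapsto x + iK'$ appearing in the standard theory of Jacobi theta functions. First I would fix notation in terms of the variable $y = \frac{\pi x}{2K}$, write everything through $\theta_1(y) = H(x)$ and $\theta_4(y) = \Theta(x)$, and recall the two ingredients that will do all the work: (i) the half-period quasi-periodicity formulas for $\theta_1$ and $\theta_4$ under $y \mapsto y + \frac{\pi i K'}{4K}$ (equivalently $x \mapsto x + \frac{iK'}{2}$), which are classical and produce an exponential prefactor $e^{\pi K'/4K} e^{-i\pi x/2K}$ together with an interchange among the four theta functions, and (ii) the product representations of $\sn$, $\cn$, $\dn$ in terms of theta quotients, cf.\ (\ref{sn-theta-quotient}) and the analogous relations for $\cn$ and $\dn$ collected in Appendix \ref{sec-app}. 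Applying (i) to the left-hand sides rewrites $H^2(x + \frac{iK'}{2})$ and $\Theta^2(x + \frac{iK'}{2})$ (after dividing by $\Theta^2(x)$ and by the constants $\Theta^2(\frac{iK'}{2})/\Theta^2(0)$) as the advertised exponential prefactor multiplied by a ratio of products of two of the remaining theta functions $\theta_2, \theta_3$ evaluated at $y$; applying (ii) then identifies that ratio with a linear combination $(1+k)\sn(x) \pm i\cn(x)\dn(x)$ up to a constant, which I pin down by evaluating at a convenient point (say $x = 0$, where $\sn = 0$, $\cn = \dn = 1$, and the theta constants give $\sqrt{k}$ and $\sqrt{k'}$ via the standard expressions $k = \theta_2^2(0)/\theta_3^2(0)$, $k' = \theta_4^2(0)/\theta_3^2(0)$).

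Concretely, for (\ref{squared-rel-2}) I expect the half-period shift to send $\theta_1(y + \tfrac{\pi i K'}{4K})$ into a multiple of $\theta_4$ or a product $\theta_2\theta_3/\theta_4$-type expression, so that $H^2(x+\frac{iK'}{2})/\Theta^2(x)$ becomes (prefactor) times a theta quotient that, by the product formulas for $\sn$, $\cn$, $\dn$, equals (constant) $\times [(1+k)\sn(x) + i\cn(x)\dn(x)]$; for (\ref{squared-rel-1}) the same shift applied to $\theta_4$ produces the complex-conjugate combination $(1+k)\sn(x) - i\cn(x)\dn(x)$, with the sign flip tracing back precisely to the difference in the quasi-periodicity phases of $\theta_1$ versus $\theta_4$. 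An efficient alternative, which I would run in parallel as a consistency check, is to verify that each proposed identity holds by checking that both sides have the same zeros and poles in $x$ (as elliptic-type functions with multiplier), the same quasi-periods, and agree at one point — the right-hand side $(1+k)\sn(x) + i\cn(x)\dn(x)$ vanishes exactly where $\sn(x) = -\frac{i}{1+k}\cn(x)\dn(x)$, i.e.\ at the complex zeros of $H(x+\frac{iK'}{2})^2$, and this matching of divisors plus one normalization forces equality by Liouville's theorem.

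The main obstacle will be bookkeeping of the exponential prefactors and the eighth-root-of-unity / sign ambiguities that inevitably arise when one takes the ``quarter-period'' shift $\frac{iK'}{2}$ rather than the half-period $iK'$: the quasi-periodicity factor for a shift by $\frac{iK'}{2}$ involves $q^{1/4} = e^{-\pi K'/4K}$ and $e^{-i\pi x/2K}$, and squaring the theta functions (as the statement does) is precisely the device that removes the branch ambiguity in $\sqrt{q}$ and in $\sqrt{k}$, so I must be careful to square \emph{before} simplifying and to track the factor of $i$ that appears on the right-hand side — this $i$ is the signature that the shift is by an odd multiple of the quarter period and cannot be removed. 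Once the prefactor is correctly computed, matching the bracketed linear combination is a short computation using the Appendix relations, and evaluating at $x = 0$ fixes the constant $\frac{\sqrt{k}}{2(1+k)}$ unambiguously (the $\frac{1}{2(1+k)}$ coming from the $\cn\dn$ normalization and the combination $(1+k)$, the $\sqrt{k}$ from $\Theta^2(0)$ versus $H$-type constants). I would present (\ref{squared-rel-2}) in full and then obtain (\ref{squared-rel-1}) by the same argument with $\theta_1 \leftrightarrow \theta_4$, noting only the place where the sign of the $i\cn\dn$ term flips.
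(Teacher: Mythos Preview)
Your primary plan relies on a step that does not exist in the form you describe. The shift $x \mapsto x + \tfrac{iK'}{2}$ corresponds to $y \mapsto y + \tfrac{\pi\tau}{4}$ with $\tau = iK'/K$; this is a \emph{quarter}-period of the theta functions, not a half-period, and there is no classical quasi-periodicity formula expressing $\theta_1\bigl(y+\tfrac{\pi\tau}{4}\bigr)$ or $\theta_4\bigl(y+\tfrac{\pi\tau}{4}\bigr)$ as an exponential multiplier times a single $\theta_k(y)$. The half-period formulas you have in mind (and which appear in the paper as (\ref{appendix-8})) are for the shift by $iK'$, not $\tfrac{iK'}{2}$. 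This is exactly why the paper calls the identities (\ref{squared-rel-2})--(\ref{squared-rel-1}) ``novel to the best of our knowledge'': they cannot be read off from the standard tables.

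The paper's proof gets around the missing quarter-period theta formulas by an indirect route. It first uses the quadratic addition identities (\ref{appendix-7}) for $H$ and $\Theta$, together with the special value $H\bigl(\tfrac{iK'}{2}\bigr) = i\,\Theta\bigl(\tfrac{iK'}{2}\bigr)$ (coming from $\sn\bigl(\tfrac{iK'}{2}\bigr) = i/\sqrt{k}$), to obtain a formula for the \emph{sum} $H^2\bigl(x+\tfrac{iK'}{2}\bigr) + \Theta^2\bigl(x+\tfrac{iK'}{2}\bigr)$, into which the genuine half-period shift (\ref{appendix-8}) can then be substituted. To separate $H^2$ from $\Theta^2$ individually, the paper exploits the fact that their \emph{ratio} is $k\,\sn^2\bigl(x+\tfrac{iK'}{2}\bigr)$ via (\ref{sn-theta-quotient}), and the quarter-period translation formulas for the Jacobi elliptic functions (\ref{appendix-quarter-period}) --- which, unlike those for the theta functions, \emph{do} exist because they follow from the ordinary addition formulas (\ref{appendix-5}). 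Combining sum and ratio gives each square separately. Your Liouville-type backup (matching zeros, multipliers, and one value) could in principle be made to work, but as written it is only a sketch: you would still need to identify the automorphy factor on the left-hand side under both $x\mapsto x+2K$ and $x\mapsto x+iK'$, and this computation essentially forces you back through the quadratic identities anyway.
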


\begin{proof}
We start with the quadratic identities for Jacobi's theta functions \cite[(1.4.16) and (1.4.19)]{Lawden}:
\begin{equation}
\label{appendix-7}
\begin{array}{l}
H(x+y) H(x-y) \Theta^2(0) = H^2(x) \Theta^2(y) -\Theta^2(x) H^2(y), \\  
\Theta(x+y)  \Theta(x-y)  \Theta^2(0) =  \Theta^2(x)  \Theta^2(y) - H^2(x) H^2(y). 
\end{array}
\end{equation}
It follows from (\ref{sn-theta-quotient}) with $\sn(\frac{iK'}{2}) = \frac{i}{\sqrt{k}}$ that 
$$
H\left(\frac{i K'}{2}\right) = i  \Theta\left(\frac{i K'}{2}\right),
$$
where $\Theta\left(\frac{i K'}{2}\right)$ is real since $\Theta$ is even with real coefficients. Hence we obtain from (\ref{appendix-7}):
\begin{align*}
\left[ H^2(x) + \Theta^2(x)\right] \Theta^2\left(\frac{iK'}{2}\right) &= 
H\left(x+\frac{i K'}{2} \right) H\left(x-\frac{i K'}{2} \right) \Theta^2(0) \\
&= \Theta\left(x+\frac{i K'}{2} \right) \Theta\left(x-\frac{i K'}{2} \right) \Theta^2(0).
\end{align*}
By using the half-period translations of Jacobi's theta functions (\ref{appendix-8}), we obtain their squared quarter-period translations:
\begin{equation}
\label{squared-rel}
\left[ H^2\left(x + \frac{iK'}{2}\right) + \Theta^2\left(x + \frac{i K'}{2}\right) \right] \Theta^2\left(\frac{i K'}{2}\right) = i e^{\frac{\pi K'}{4K}} e^{-\frac{i \pi x}{2K}} H(x) \Theta(x) \Theta^2(0).
\end{equation}
Using (\ref{sn-theta-quotient}) and (\ref{appendix-quarter-period}) with 
\begin{align*}
1 + k \sn^2\left(x + \frac{i K'}{2}\right) &=  \frac{2 (1+k) \sn(x)}{(1+k) \sn(x) - i \cn(x) \dn(x)},
\end{align*}
we obtain (\ref{squared-rel-2}) and (\ref{squared-rel-1}) from (\ref{squared-rel}). 
\end{proof}

\begin{remark}
Setting $x = 0$ in (\ref{squared-rel-1}) yields the useful relation:
\begin{equation}
\label{squared-rel-3}
\frac{\Theta^4\left(\frac{i K'}{2}\right)}{\Theta^2(0)} = \frac{\sqrt{k}}{2(1+k)} e^{\frac{\pi K'}{4K}} .
\end{equation}
\end{remark}

\subsection{One-mode transformation of the snoidal potential}

Here we apply the one-fold Darboux transformation (\ref{DT}) with the  particular solution $\varphi = (p,q)^T$ of the linear system (\ref{spectral}) and (\ref{time_evol}) given by (\ref{p}) and (\ref{time-evol}). The following proposition contains an important identity for the relevant computations of the two-mode transformation. 

\begin{proposition}
	\label{prop-6}
For every $x, \alpha \in \mathbb{R}$, we have 
\begin{align}
\notag
& \sn(x) \sn(\alpha) \cn(x+\alpha) \dn(x+\alpha) + \sn(x) \sn(x+\alpha) \cn(\alpha) \dn(\alpha) \\
& \qquad + \sn^2(\alpha) - \sn^2(x+\alpha) = 0.
\label{A-expression-zero}
\end{align}	
\end{proposition}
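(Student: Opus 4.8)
The plan is to prove the identity \eqref{A-expression-zero} by reducing it to a polynomial identity in $\sn(x)$, $\sn(\alpha)$ and the associated $\cn,\dn$ functions via the standard addition formulas. First I would substitute the addition formulas for $\sn(x+\alpha)$, $\cn(x+\alpha)$, and $\dn(x+\alpha)$, namely
\begin{align*}
\sn(x+\alpha) &= \frac{\sn(x)\cn(\alpha)\dn(\alpha) + \sn(\alpha)\cn(x)\dn(x)}{1 - k^2 \sn^2(x) \sn^2(\alpha)}, \\
\cn(x+\alpha) &= \frac{\cn(x)\cn(\alpha) - \sn(x)\dn(x)\sn(\alpha)\dn(\alpha)}{1 - k^2 \sn^2(x)\sn^2(\alpha)}, \\
\dn(x+\alpha) &= \frac{\dn(x)\dn(\alpha) - k^2 \sn(x)\cn(x)\sn(\alpha)\cn(\alpha)}{1 - k^2\sn^2(x)\sn^2(\alpha)},
\end{align*}
which should be recorded in (or follow from) the Appendix. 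Multiplying the whole relation by the common denominator $(1 - k^2\sn^2(x)\sn^2(\alpha))^2$ turns the claim into a polynomial identity; the goal is to show this polynomial vanishes identically after using only $\cn^2 = 1-\sn^2$ and $\dn^2 = 1 - k^2\sn^2$ on each variable separately.

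To keep the bookkeeping manageable I would introduce abbreviations $s = \sn(x)$, $\sigma = \sn(\alpha)$, and write $\cn^2(x) = 1-s^2$, $\dn^2(x) = 1-k^2 s^2$, $\cn^2(\alpha) = 1-\sigma^2$, $\dn^2(\alpha) = 1-k^2\sigma^2$. The key observation is that the first two terms in \eqref{A-expression-zero} both carry an overall factor $\sn(x) = s$, and after substitution they acquire a denominator $(1-k^2 s^2\sigma^2)^2$; the third and fourth terms, $\sn^2(\alpha) - \sn^2(x+\alpha)$, combine over the same denominator using the $\sn$ addition formula. So the natural order is: (i) expand the numerator of $\sn(x)\sn(\alpha)\cn(x+\alpha)\dn(x+\alpha)$, (ii) expand the numerator of $\sn(x)\sn(x+\alpha)\cn(\alpha)\dn(\alpha)$, (iii) expand $\sigma^2(1-k^2s^2\sigma^2)^2 - (\text{numerator of }\sn^2(x+\alpha))$, then (iv) add the three numerators and verify the sum is the zero polynomial. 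In each of (i)--(iii) one should immediately replace every even power $\cn^2,\dn^2$ by its polynomial form so that only the irreducible monomials in $s,\sigma,\cn(x)\dn(x),\cn(\alpha)\dn(\alpha)$ survive; note $\cn(x)\dn(x)$ and $\cn(\alpha)\dn(\alpha)$ each appear to at most the first power in every term (cross terms produce their product, which is fine and cancels in pairs), so the surviving expression is genuinely a polynomial in $s,\sigma$ once those bilinear pieces are grouped.

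The main obstacle will simply be the volume of algebra: the product $\cn(x+\alpha)\dn(x+\alpha)$ expands into four terms, multiplying by $\sn(x)\sn(\alpha)$ and then by $(1-k^2s^2\sigma^2)$ (to clear one power of the denominator before combining with the $\sn$-addition piece) produces a sizeable polynomial, and one must carefully track the signs coming from the $-k^2 s c s' c'$ pieces in the $\dn$ addition formula and the $- s d s' d'$ piece in the $\cn$ addition formula. A clean way to organize the verification is to separate the final summed numerator into the part that is even in the swap $x \leftrightarrow \alpha$ and the part that involves the bilinear factor $\cn(x)\dn(x)\,\cn(\alpha)\dn(\alpha)$ versus the parts linear in just one such factor; by the symmetry of the left-hand side of \eqref{A-expression-zero} under $x\leftrightarrow\alpha$ (the expression is manifestly symmetric), it suffices to check cancellation in two symmetry classes rather than term by term. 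Alternatively, since both sides are entire functions of $x$ for fixed real $\alpha$ and the identity, once cleared of denominators, is a trigonometric-polynomial identity of bounded degree, one may instead verify it by checking it holds at sufficiently many values of $x$ (or, most efficiently, by expanding both sides in a Taylor series at $x=0$ up to the order equal to the degree) — but the direct algebraic reduction sketched above is the route I would take and it terminates in a finite, if tedious, computation.
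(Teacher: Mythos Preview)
Your core approach---substitute the addition formulas for $\sn,\cn,\dn$ at $x+\alpha$, clear the common denominator $(1-k^2\sn^2 x\,\sn^2\alpha)^2$, and reduce via $\cn^2=1-\sn^2$, $\dn^2=1-k^2\sn^2$---is exactly what the paper does. The paper streamlines the algebra by observing, at two stages, that the accumulated numerator is divisible by $1-k^2\sn^2 x\,\sn^2\alpha$, so the degree drops twice and the final step is a one-line cancellation; this is worth looking for rather than expanding the full polynomial.

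One concrete error: the left-hand side of \eqref{A-expression-zero} is \emph{not} symmetric under $x\leftrightarrow\alpha$. After the swap the second summand becomes $\sn(\alpha)\sn(x+\alpha)\cn(x)\dn(x)$ and the third becomes $\sn^2(x)$, neither of which matches the original term. (The swapped identity is of course also true, by relabeling, but the expression itself is not invariant, so you cannot split the computation into symmetry classes as proposed.) Dropping that shortcut, the direct expansion you describe still goes through.
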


\begin{proof}
We expand the left-hand side of (\ref{A-expression-zero}) with the addition formulas (\ref{appendix-5}):
\begin{align*}
& \frac{\sn(x) \sn(\alpha)   [\cn(x) \cn(\alpha) - \sn(x) \sn(\alpha) \dn(x) \dn(\alpha)][\dn(x) \dn(\alpha) - k^2 \sn(x) \sn(\alpha) \cn(x) \cn(\alpha)]}{[1 - k^2 \sn^2(x) \sn^2(\alpha)]^2} \\
& \qquad + \frac{\sn(x) \cn(\alpha) \dn(\alpha)  [\sn(x) \cn(\alpha) \dn(\alpha) + \sn(\alpha) \cn(x) \dn(x)]}{[1 - k^2 \sn^2(x) \sn^2(\alpha)]}\\
& \qquad  + \sn^2(\alpha)
- \frac{[\sn(x) \cn(\alpha) \dn(\alpha) + \sn(\alpha) \cn(x) \dn(x)]^2}{[1 - k^2 \sn^2(x) \sn^2(\alpha)]^2}.
\end{align*}
Expanding the numerators of the two quotients with the squared denominators 
yields a simplification
\begin{align*}
& -\frac{\sn(x) \sn(\alpha) \cn(x) \cn(\alpha) \dn(x) \dn(\alpha) [1 - k^2 \sn^2(x) \sn^2(\alpha)] + {\rm Rem}}{[1 - k^2 \sn^2(x) \sn^2(\alpha)]^2},
\end{align*}
where
\begin{align*}
{\rm Rem} &:= \sn^2(x) \sn^2(\alpha) \dn^2(x) \dn^2(\alpha)  + k^2 \sn^2(x) \sn^2(\alpha) \cn^2(x) \cn^2(\alpha)\\
& \quad + \sn^2(x) \cn^2(\alpha) \dn^2(\alpha) + \sn^2(\alpha) \cn^2(x) \dn^2(x)
\end{align*}
is also divisible by $[1  - k^2 \sn^2(x) \sn^2(\alpha)]$ due to the explicit factorization:
\begin{align*}
{\rm Rem} &= [\cn^2(x) \sn^2(\alpha) + \sn^2(x) \dn^2(\alpha)] [1 - k^2 \sn^2(x) \sn^2(\alpha)].
\end{align*}
This allows us to rewrite the left-hand side of (\ref{A-expression-zero}) in the simplified form:
\begin{align*}
\frac{\sn^2(x) \cn^2(\alpha) \dn^2(\alpha) - \cn^2(x) \sn^2(\alpha) - \sn^2(x) \dn^2(\alpha)}{[1 - k^2 \sn^2(x) \sn^2(\alpha)]} + \sn^2(\alpha).
\end{align*}
The numerator of the first quotient is divisible by $[1  - k^2 \sn^2(x) \sn^2(\alpha)]$:
\begin{align*}
& \sn^2(x) \cn^2(\alpha) \dn^2(\alpha) - \cn^2(x) \sn^2(\alpha) - \sn^2(x) \dn^2(\alpha) \\
\quad &= -\sn^2(\alpha) [\cn^2(x) + \sn^2(x) \dn^2(\alpha)] \\
\quad &= -\sn^2(\alpha) [1 - k^2 \sn^2(x) \sn^2(\alpha)],
\end{align*}
which completes the proof of (\ref{A-expression-zero}) with $-\sn^2(\alpha) + \sn^2(\alpha) = 0$.	
\end{proof}

With the help of Proposition \ref{prop-6}, we prove that the transformation 
(\ref{DT}) with the one-mode solution (\ref{p}) and (\ref{time-evol}) recovers the same snoidal potential $\phi_0(x) = k \sn(x)$.

\begin{proposition}
	\label{prop-7}
Let $z$ be given by (\ref{z}), $\zeta$ be given by (\ref{char-prop}),  and $\varphi = (p,q)^T$ be given by (\ref{p}) and (\ref{time-evol}) with $s$ and $\omega$ in (\ref{s-expression}) and (\ref{omega-expression}). The transformation formula (\ref{DT}) with $u(x) = k \sn(x)$ returns 
\begin{equation}
\label{single-mode}
\hat{u}(x) = -\frac{1}{\sn(x + 2 \alpha)} \quad \Rightarrow \quad 
-\hat{u}(x + i K') = k \sn(x+2\alpha),
\end{equation}
where $x$ stands for $x + c_0 t$ with $c_0 = 1 + k^2$.
\end{proposition}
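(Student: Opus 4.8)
The plan is to substitute the explicit eigenfunctions into the Darboux formula (\ref{DT}), reduce the resulting theta-function quotient to a rational function of Jacobi elliptic functions, and then close the computation with Proposition \ref{prop-6}.

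First I would insert $\varphi=(p,q)^T$ from (\ref{p})--(\ref{time-evol}) into (\ref{DT}). In the ratio $pq/(p^2-q^2)$ the time factor $e^{\omega t}$ and the common prefactors $e^{s(z)x}$, $e^{-i\pi x/(4K)}$, $\Theta(x)^{-1}$ occur to the same power in $p$ and in $q$, so they cancel identically; in particular $\hat u$ depends on $(x,t)$ only through $\xi=x+c_0t$, as claimed. What survives is a ratio of products of $H$ and $\Theta$ at the arguments $x-iz$ and $iz$. I would simplify it with the quadratic theta identities (\ref{appendix-7}) taken with the arguments $u=x-iz$, $v=iz$ (so that $u+v=x$, $u-v=x-2iz$), together with $H(w)=\sqrt{k}\,\Theta(w)\sn(w)$ from (\ref{sn-theta-quotient}). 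After the common factor $\Theta^2(x-iz)\Theta^2(iz)$ cancels, this should collapse to
\begin{equation*}
\frac{pq}{p^2-q^2}=\frac{k\,\sn(x-iz)\,\sn(iz)}{k^2\,\sn^2(x-iz)\,\sn^2(iz)-1},
\qquad
\hat u=k\sn(x)-\frac{4i\zeta\,k\,\sn(x-iz)\,\sn(iz)}{k^2\,\sn^2(x-iz)\,\sn^2(iz)-1}.
\end{equation*}

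Next I would substitute the parameterization (\ref{z}), for which $x-iz=(x+\alpha)-\tfrac{iK'}{2}$ and $iz=\tfrac{iK'}{2}-\alpha$, and use the quarter-period translation formulas (\ref{appendix-quarter-period}) exactly as in the proof of Proposition \ref{prop-4} to obtain (with a common branch of the square root) $\sqrt{k}\,\sn(x-iz)=A/P$ and $\sqrt{k}\,\sn(iz)=-B/Q$, where $A:=(1+k)\sn(x+\alpha)-i\cn(x+\alpha)\dn(x+\alpha)$, $B:=(1+k)\sn(\alpha)-i\cn(\alpha)\dn(\alpha)$, $P:=1+k\sn^2(x+\alpha)$, $Q:=1+k\sn^2(\alpha)$. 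The content of (\ref{appendix-extra}) is precisely $A\bar A=P^2$ and $B\bar B=Q^2$, hence $P^2Q^2-A^2B^2=AB(\overline{AB}-AB)$ and the two $AB$ factors cancel, leaving $pq/(p^2-q^2)=PQ/(\overline{AB}-AB)$. Writing $\overline{AB}-AB=-2i\,\operatorname{Im}(AB)$, expanding, and using the $\sn$-addition formula from (\ref{appendix-5}), I get $\operatorname{Im}(AB)=-(1+k)\sn(x+2\alpha)\bigl[1-k^2\sn^2(x+\alpha)\sn^2(\alpha)\bigr]$. Substituting this back, together with $2\zeta=(1+k)(1-k\sn^2(\alpha))/Q$ from (\ref{char-prop}), I expect to arrive at
\begin{equation*}
\hat u(x)=k\sn(x)-\frac{\bigl(1+k\sn^2(x+\alpha)\bigr)\bigl(1-k\sn^2(\alpha)\bigr)}{\sn(x+2\alpha)\bigl[1-k^2\sn^2(x+\alpha)\sn^2(\alpha)\bigr]}.
\end{equation*}

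Finally I would clear denominators: the assertion $\hat u(x)=-1/\sn(x+2\alpha)$ is equivalent, after clearing denominators and cancelling common terms on the two sides, to
\begin{equation*}
k\,\sn(x)\,\sn(x+2\alpha)\bigl[1-k^2\sn^2(x+\alpha)\sn^2(\alpha)\bigr]=k\bigl[\sn^2(x+\alpha)-\sn^2(\alpha)\bigr];
\end{equation*}
but the $\sn$-addition formula in (\ref{appendix-5}) gives $\sn(x+2\alpha)\bigl[1-k^2\sn^2(x+\alpha)\sn^2(\alpha)\bigr]=\sn(x+\alpha)\cn(\alpha)\dn(\alpha)+\sn(\alpha)\cn(x+\alpha)\dn(x+\alpha)$, so this is exactly Proposition \ref{prop-6} (equation (\ref{A-expression-zero})). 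This proves $\hat u(x)=-1/\sn(x+2\alpha)$. The stated implication then follows by replacing $x$ with $x+iK'$ and using the half-period relation $\sn(w+iK')=1/(k\sn(w))$ (see Appendix \ref{sec-app}): $-\hat u(x+iK')=1/\sn(x+2\alpha+iK')=k\,\sn(x+2\alpha)$.

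The main obstacle is the middle step: carrying the theta-function quotient over to an honest rational function of $\sn,\cn,\dn$. Getting the quarter-period translations with the right branch and verifying the $A\bar A=P^2$, $B\bar B=Q^2$ collapse via (\ref{appendix-extra}) is where the bookkeeping is delicate. Once that is in place, recognizing Proposition \ref{prop-6} --- rewritten through the $\sn$-addition formula --- as precisely the identity that closes the argument is the clean finish.
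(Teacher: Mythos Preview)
Your argument is correct and reaches the same intermediate expression
\[
\hat u(x)=k\sn(x)-\frac{(1+k\sn^2(x+\alpha))(1-k\sn^2(\alpha))}{\sn(x+2\alpha)\,[1-k^2\sn^2(x+\alpha)\sn^2(\alpha)]}
\]
that the paper obtains, and both proofs close with Proposition~\ref{prop-6} after rewriting the numerator of the addition formula for $\sn(x+2\alpha)$. One small remark: the appeal to the quadratic theta identities~(\ref{appendix-7}) in your first step is unnecessary --- the collapse to $k\,\sn(x-iz)\sn(iz)/\bigl(k^2\sn^2(x-iz)\sn^2(iz)-1\bigr)$ follows from~(\ref{sn-theta-quotient}) alone, once you form the ratio $pq/(p^2-q^2)$ and factor out $\Theta^2(x-iz)\Theta^2(iz)$.

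The route, however, differs from the paper's in a way worth noting. The paper computes $pq$, $p^2$, $q^2$ \emph{separately}, invoking the squared quarter-period theta formulas of Proposition~\ref{prop-5} (equations~(\ref{squared-rel-2})--(\ref{squared-rel-1})) to strip off the $\tfrac{iK'}{2}$ shift at the level of $H^2$ and $\Theta^2$, and only then forms the Darboux quotient. You instead pass immediately to the ratio, so that all theta factors cancel and only $\sn(x-iz)$, $\sn(iz)$ survive; the quarter-period shift is then handled by the standard formula~(\ref{appendix-quarter-period}) for $\sn$, and the key collapse is your observation $A\bar A=P^2$, $B\bar B=Q^2$ (which is exactly~(\ref{appendix-extra})), whence $P^2Q^2-A^2B^2=AB(\overline{AB}-AB)$ and the $AB$ cancels. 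This bypasses Proposition~\ref{prop-5} entirely and is more economical for the present statement. The paper's approach, on the other hand, yields explicit real expressions for $pq$ and $p^2-q^2$ individually, which are reused verbatim in the two-mode computation of Theorem~\ref{theorem-1}; so the extra machinery pays off downstream.
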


\begin{proof}
By using (\ref{sn-theta-quotient}) and (\ref{squared-rel}), we obtain 
from (\ref{p}) and (\ref{time-evol}):
	\begin{align*}
	pq  &= e^{2 s x + 2\omega t} e^{-\frac{i \pi x}{2K}} 
	\frac{H\left( x+\alpha-\frac{i K'}{2} \right) 
		\Theta\left( x+\alpha-\frac{i K'}{2} \right)}{\Theta^2(x) 
		H\left( -\alpha + \frac{i K'}{2} \right) 
		\Theta\left( -\alpha + \frac{i K'}{2} \right) }\\
	&= -e^{2 s x + 2\omega t} \frac{H^2(x + \alpha) + \Theta^2(x+\alpha)}{\Theta^2(x) [H^2(\alpha) + \Theta^2(\alpha)]} \\
	&= -e^{2 s x + 2\omega t} \frac{\Theta^2(x+\alpha)}{
\Theta^2(x) \Theta^2(\alpha)} \frac{1 + k \sn^2(x+\alpha)}{1+k \sn^2(\alpha)}.
	\end{align*}
Since the complex exponential $e^{-\frac{i \pi x}{2K}}$ cancels out, we confirm that $pq$ is real since $e^{2s(z) x}$ and $e^{2 \omega(z) t}$ are real.
	
Similarly, by using (\ref{squared-rel-2}) and (\ref{squared-rel-1}), we obtain 
	\begin{align*}
	p^2 &= e^{2 s x + 2\omega t}  e^{-\frac{i \pi x}{2K}} \frac{H^2\left( x+\alpha -\frac{i K'}{2}\right)}{\Theta^2(x) \Theta^2\left(- \alpha + \frac{i K'}{2}\right)} \\
	&=  e^{2 s x + 2\omega t}
	\frac{(1+k) \sn(x+\alpha) -i \cn(x+\alpha) \dn(x+\alpha)}{(1+k) \sn(\alpha) + i \cn(\alpha) \dn(\alpha)}  \frac{\Theta^2(x+\alpha)}{\Theta^2(x) \Theta^2(\alpha)}
	\end{align*}
	and
	\begin{align*}
	q^2 &= e^{2 s x + 2\omega t}  e^{-\frac{i \pi x}{2K}} \frac{\Theta^2\left( x+\alpha -\frac{i K'}{2}\right)}{\Theta^2(x) H^2\left(- \alpha + \frac{i K'}{2}\right)} \\
	&= e^{2 s x + 2\omega t} 
	\frac{(1+k) \sn(x+\alpha) +i \cn(x+\alpha) \dn(x+\alpha)}{(1+k) \sn(\alpha) - i \cn(\alpha) \dn(\alpha)} \frac{\Theta^2(x+\alpha)}{\Theta^2(x) \Theta^2(\alpha)}.
	\end{align*}
Again, since the complex exponential $e^{-\frac{i \pi x}{2K}}$ cancels out, we confirm that $q^2$ is the complex conjugate of $p^2$.
	
Substituting explicit expressions for $pq$, $p^2$, and $q^2$ into (\ref{DT}) and using (\ref{char-prop}) with (\ref{appendix-5}), we obtain the new solution to the mKdV equation (\ref{1})
	\begin{align*}
	\hat{u}(x) &= k \sn(x) - \frac{4i\zeta pq}{p^2-q^2} \\
	&= k \sn(x) - \frac{2 \zeta}{(1+k)} \frac{[1 + k \sn^2(\alpha)][1 + k \sn^2(x+\alpha)]}{\sn(\alpha) \cn(x+\alpha) \dn(x+\alpha) + \sn(x+\alpha) \cn(\alpha) \dn(\alpha)} \\
	&= k \sn(x) - \frac{[1 - k \sn^2(\alpha)][1 + k \sn^2(x+\alpha)]}{
		\sn(x+2\alpha) [1 - k^2 \sn^2(\alpha)\sn^2(x + \alpha)]} \\
	&= -\frac{1}{\sn(x+2\alpha)} + k \left[ \sn(x) + \frac{\sn^2(\alpha) -  \sn^2(x+\alpha)}{\sn(\alpha) \cn(x+\alpha) \dn(x+\alpha) + \sn(x+\alpha) \cn(\alpha) \dn(\alpha)} \right] \\
	&= -\frac{1}{\sn(x+2\alpha)},
	\end{align*}
	where the expression in the brackets is identically equal to zero by (\ref{A-expression-zero}). Translating the new solution by $i K'$ with (\ref{appendix-2}), we obtain (\ref{single-mode}). 	
\end{proof}

\begin{remark}
The new solution 
$$
\tilde{u}(x,t) := -\hat{u}(x+iK'+c_0t) = \phi_0(x+c_0t + 2 \alpha)
$$ 
in (\ref{single-mode}) coincides with the same solution $u(x,t) = \phi_0(x+c_0t)$ after the translation along the real axis to the left by the phase shift $2 \alpha$.
\end{remark}

\begin{remark}
	If we use the second linearly independent solution of the linear system (\ref{ZS-norm}) given by (\ref{p}) and (\ref{time-evol}) with $z'$ instead of $z$ in the transformation formula (\ref{DT}), then $\hat{u}$ is given by (\ref{single-mode}) with $2 \alpha$ being replaced by $-2\alpha$. It is translated  along the real axis to the right by the same phase shift $2\alpha$.
\end{remark}

\subsection{Two-mode transformation of the snoidal potential}

In order to obtain a nontrivial solution $\hat{u}$ describing a soliton moving on the snoidal background, we take a linear superposition of two linearly independent solutions given by (\ref{p}) and (\ref{time-evol}) with $z$ and $z'$. The two solutions are only different by the sign of $\alpha$ in the parameterization (\ref{z}) which leaves the solution written in the same form but with the opposite signs of $s$ and $\omega$ in (\ref{s-expression}) and (\ref{omega-expression}). Hence we write:
\begin{equation}
\label{p-expression}
p = c_+ e^{sx + \omega t} e^{- \frac{i \pi x}{4K}} 	\frac{H\left( x+\alpha-\frac{i K'}{2} \right)}{\Theta(x) \Theta\left( -\alpha + \frac{i K'}{2} \right) } + c_- 
e^{-sx - \omega t} e^{- \frac{i \pi x}{4K}}
\frac{H\left( x - \alpha-\frac{i K'}{2} \right)}{\Theta(x) \Theta\left( \alpha + \frac{i K'}{2} \right) }
\end{equation}
and
\begin{equation}
\label{q-expression}
q = c_+ e^{sx + \omega t}  e^{- \frac{i \pi x}{4K}}
\frac{\Theta\left( x+\alpha-\frac{i K'}{2} \right)}{\Theta(x) H\left( -\alpha + \frac{i K'}{2} \right) } + c_- 
e^{-sx - \omega t} e^{- \frac{i \pi x}{4K}}
\frac{\Theta\left( x - \alpha - \frac{i K'}{2} \right)}{\Theta(x) H\left( \alpha + \frac{i K'}{2} \right) },
\end{equation}
where $c_+$ and $c_-$ are arbitrary constants. We choose 
\begin{equation}
\label{choice-c}
c_+ = \mathfrak{c} e^{s x_0}, \quad c_- = \mathfrak{c} e^{-sx_0},
\end{equation}
and define $\eta := -s (x+x_0) -\omega  t$. The following proposition contains an important identity in the derivation of the solution form in Theorem \ref{theorem-1}. 

\begin{proposition}
	\label{prop-8}
	For every $x, \alpha \in \mathbb{R}$, we have 
\begin{align}
& \quad - i \frac{\sn(x + \alpha - \frac{iK'}{2}) \sn(\alpha - \frac{iK'}{2}) - \sn(x - \alpha - \frac{iK'}{2}) \sn(\alpha + \frac{iK'}{2})}{\sn(\alpha + \frac{iK'}{2}) \sn(\alpha - \frac{iK'}{2})}  \notag \\
& \quad \times  \left[ (1+k) \sn(x) [1 - k \sn^2(\alpha)] + i \cn(x) \dn(x) [1 + k \sn^2(\alpha)] \right] \left[ 1 + k \sn^2(\alpha) \right] \notag \\
& = 2 \left[ (1+k)^2\sn^2(\alpha)(1-k\sn^2(x)) - \cn^2(\alpha) \dn^2(\alpha) 
(1+k\sn^2(x)) \right], \label{to-prove}
\end{align}
\end{proposition}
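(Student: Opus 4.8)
\textbf{Proof proposal for Proposition \ref{prop-8}.}

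The plan is to reduce everything to the real elliptic functions $\sn(x), \cn(x), \dn(x)$ and $\sn(\alpha), \cn(\alpha), \dn(\alpha)$ by first evaluating the quarter-period translated quantities on the left-hand side. The two structural ingredients are: (i) the quarter-period addition formula (\ref{appendix-quarter-period}), which gives
$$
\sn\left(\alpha \pm \frac{iK'}{2}\right) = \frac{(1+k)\sn(\alpha) \mp i\cn(\alpha)\dn(\alpha)}{\text{(normalizing factor)}},
$$
or more precisely the ratios that appeared already in the proof of Proposition \ref{prop-7}, namely $1 + k\sn^2(\alpha \pm \frac{iK'}{2}) = \frac{2(1+k)\sn(\alpha)}{(1+k)\sn(\alpha) \mp i\cn(\alpha)\dn(\alpha)}$ and the analogous identity with the argument shifted by $x$; and (ii) the ordinary addition formulas (\ref{appendix-5}) for $\sn(x+\alpha), \cn(x+\alpha), \dn(x+\alpha)$. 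First I would rewrite the difference quotient
$$
\frac{\sn(x+\alpha-\tfrac{iK'}{2})\sn(\alpha-\tfrac{iK'}{2}) - \sn(x-\alpha-\tfrac{iK'}{2})\sn(\alpha+\tfrac{iK'}{2})}{\sn(\alpha+\tfrac{iK'}{2})\sn(\alpha-\tfrac{iK'}{2})}
$$
as $\dfrac{\sn(x+\alpha-\tfrac{iK'}{2})}{\sn(\alpha+\tfrac{iK'}{2})} - \dfrac{\sn(x-\alpha-\tfrac{iK'}{2})}{\sn(\alpha-\tfrac{iK'}{2})}$, express each quarter-period-shifted $\sn$ using (i), and then apply (ii) to the resulting real arguments $x \pm \alpha$. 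Because $\sn(w - \tfrac{iK'}{2})$ and $\sn(\alpha \mp \tfrac{iK'}{2})$ carry complex-conjugate numerators, the two terms in the difference should combine with real and imaginary parts separating cleanly.

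Next I would multiply by the bracketed factor $(1+k)\sn(x)[1-k\sn^2(\alpha)] + i\cn(x)\dn(x)[1+k\sn^2(\alpha)]$ together with $[1+k\sn^2(\alpha)]$ and the leading $-i$. The key mechanism here is that the difference quotient from the previous step will itself be (after simplification) a multiple of the complex-conjugate bracket $(1+k)\sn(x)[1-k\sn^2(\alpha)] - i\cn(x)\dn(x)[1+k\sn^2(\alpha)]$, or some closely related conjugate expression, so that the product telescopes into a real quantity of the form $A^2 + B^2$ up to elementary factors. This is exactly how the right-hand side $2[(1+k)^2\sn^2(\alpha)(1-k\sn^2(x)) - \cn^2(\alpha)\dn^2(\alpha)(1+k\sn^2(x))]$ should emerge, using $\cn^2(x)\dn^2(x) = (1-\sn^2(x))(1-k^2\sn^2(x))$ and (\ref{appendix-extra}) to pass between $\cn^2\dn^2$ and $[1+k\sn^2]^2 - (1+k)^2\sn^2$ for both the $x$ and $\alpha$ arguments.

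The main obstacle will be the bookkeeping in the denominators: the addition formulas (\ref{appendix-5}) introduce the common denominator $1 - k^2\sn^2(x)\sn^2(\alpha)$ in each of $\sn(x\pm\alpha)$, $\cn(x\pm\alpha)$, $\dn(x\pm\alpha)$, while the quarter-period normalizing factors introduce denominators of the form $(1+k)\sn(\alpha) \mp i\cn(\alpha)\dn(\alpha)$ whose product is $(1+k)^2\sn^2(\alpha) + \cn^2(\alpha)\dn^2(\alpha) = [1+k\sn^2(\alpha)]^2$ by (\ref{appendix-extra}). Checking that all these denominators cancel against the explicit factors $[1+k\sn^2(\alpha)]$ present on the left-hand side, and that the surviving numerator is divisible by $1 - k^2\sn^2(x)\sn^2(\alpha)$ so that it too cancels, is the delicate part — it is essentially the same divisibility phenomenon exploited in Proposition \ref{prop-6}, and I would invoke (\ref{A-expression-zero}) itself, or re-derive the factorization $\cn^2(x)\sn^2(\alpha) + \sn^2(x)\dn^2(\alpha) = \text{(something)} \cdot [1 - k^2\sn^2(x)\sn^2(\alpha)]$-type identities, to force the cancellation. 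Once the denominators are cleared, matching the polynomial numerator against $2[(1+k)^2\sn^2(\alpha)(1-k\sn^2(x)) - \cn^2(\alpha)\dn^2(\alpha)(1+k\sn^2(x))]$ is a finite verification using only the fundamental relations (\ref{fund-rel}).
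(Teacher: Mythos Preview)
Your overall plan is sound and the ``conjugate mechanism'' you anticipate is exactly what happens: after simplification the difference quotient carries a factor $\cn(x)\dn(x)[1+k\sn^2(\alpha)] + i(1+k)\sn(x)[1-k\sn^2(\alpha)]$, which is $i$ times the complex conjugate of the bracketed factor on the left-hand side, so multiplying by $-i$ times that bracket produces the real quantity $(1+k)^2\sn^2(x)[1-k\sn^2(\alpha)]^2 + \cn^2(x)\dn^2(x)[1+k\sn^2(\alpha)]^2$, and the remaining work is to check that the leftover rational factor reduces correctly.

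The paper's proof follows the same outline but inserts one device you do not mention: the Landen transformation (\ref{appendix-9}), which sends $\frac{(1+k)\sn(u)}{1+k\sn^2(u)} \mapsto \sn((1+k)u;\kappa)$ and similarly for $\cn,\dn$. After the quarter-period expansion your step~(i) produces precisely these combinations, so passing to the Landen-transformed modulus collapses each quotient to a single elliptic function of argument $(1+k)(x\pm\alpha)$ or $(1+k)\alpha$. The addition formulas (\ref{appendix-5}) can then be applied \emph{in the transformed variable} to recombine the $(x+\alpha)$ and $(x-\alpha)$ pieces into functions of $(1+k)x$ and $(1+k)\alpha$ separately, after which one transforms back. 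This bypasses the heavy denominator bookkeeping you flag as the main obstacle: the factors $1-k^2\sn^2(x)\sn^2(\alpha)$ never appear explicitly, and the only residual identity to verify is the single-variable relation
\[
(1+k)^2\sn^2(x)[1-k\sn^2(\alpha)]^2 + \cn^2(x)\dn^2(x)[1+k\sn^2(\alpha)]^2 = (1+k\sn^2 x)^2(1+k\sn^2\alpha)^2 - 4k(1+k)^2\sn^2 x\,\sn^2\alpha,
\]
which follows from (\ref{fund-rel}). Your direct route would also succeed, but the Landen step is what keeps the computation short; without it the divisibility by $1-k^2\sn^2(x)\sn^2(\alpha)$ that you plan to extract by hand (in the spirit of Proposition~\ref{prop-6}) becomes a genuinely laborious expansion rather than a one-line check.
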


\begin{proof}
By using (\ref{appendix-extra}) and (\ref{appendix-5}), we obtain 
	\begin{align*}
	& \sn\left(\alpha + \frac{iK'}{2}\right) \sn\left(\alpha - \frac{iK'}{2} \right) \\
	&= \frac{[(1+k) \sn(\alpha) + i \cn(\alpha) \dn(\alpha)][(1+k)\sn(\alpha) - i \cn(\alpha) \dn(\alpha)]}{k[1 + k \sn^2(\alpha)]^2} \\
	&= \frac{(1+k)^2 \sn^2(\alpha) + \cn^2(\alpha) \dn^2(\alpha)}{k[1 + k \sn^2(\alpha)]^2} = \frac{1}{k}
	\end{align*}
	and
	\begin{align*}
	& \quad \frac{\sn(x + \alpha - \frac{iK'}{2}) \sn(\alpha - \frac{iK'}{2}) - \sn(x - \alpha - \frac{iK'}{2}) \sn(\alpha + \frac{iK'}{2})}{\sn(\alpha + \frac{iK'}{2}) \sn(\alpha - \frac{iK'}{2})}  \\
	&= \frac{[(1+k) \sn(x + \alpha) - i \cn(x+\alpha) \dn(x + \alpha)][(1+k) \sn(\alpha) - i \cn(\alpha) \dn(\alpha)]}{[1 + k \sn^2(x+\alpha)] [1 + k \sn^2(\alpha)]} \\
	& \quad - \frac{[(1+k) \sn(x - \alpha) - i \cn(x-\alpha) \dn(x - \alpha)][(1+k) \sn(\alpha) + i \cn(\alpha) \dn(\alpha)]}{[1 + k \sn^2(x-\alpha)] [1 + k \sn^2(\alpha)]}
	\end{align*}
We will use Landen transformation formulas:
\begin{equation}
\begin{array}{l}
\displaystyle
\sn\left((1+k)x;\frac{2\sqrt{k}}{1+k}\right) = \frac{(1+k) \sn(x;k)}{1 + k \sn^2(x;k)}, \\  
\displaystyle
\cn\left((1+k)x;\frac{2\sqrt{k}}{1+k}\right) = \frac{\cn(x;k) \dn(x;k)}{1 + k \sn^2(x;k)}, \\
\displaystyle
\dn\left((1+k)x;\frac{2\sqrt{k}}{1+k}\right) = \frac{1 - k\sn^2(x;k)}{1 + k \sn^2(x;k)}, \end{array} 
		\label{appendix-9}
\end{equation}
where the new elliptic modulus $\kappa := 2 \sqrt{k}/ (1+k)$ and the old elliptic modulus $k$ are listed explicitly. Suppressing the second argument, the transformation formulas (\ref{appendix-9}) rewrite the previous expression in the new form:
	\begin{align*}
	& \quad \frac{\sn(x + \alpha - \frac{iK'}{2}) \sn(\alpha - \frac{iK'}{2}) - \sn(x - \alpha - \frac{iK'}{2}) \sn(\alpha + \frac{iK'}{2})}{\sn(\alpha + \frac{iK'}{2}) \sn(\alpha - \frac{iK'}{2})}  \\
	&= [ \sn(1+k)(x+\alpha) - i \cn(1+k)(x+\alpha)] \; [\sn(1+k)\alpha - i \cn(1+k)\alpha] \\
	& \quad - [ \sn(1+k)(x-\alpha) - i \cn(1+k)(x-\alpha)] \; [\sn(1+k)\alpha + i \cn(1+k)\alpha].
	\end{align*}
By using  (\ref{appendix-9}) and (\ref{appendix-5}) backwards, we obtain 
	\begin{align*}
	& \quad \frac{\sn(x + \alpha - \frac{iK'}{2}) \sn(\alpha - \frac{iK'}{2}) - \sn(x - \alpha - \frac{iK'}{2}) \sn(\alpha + \frac{iK'}{2})}{\sn(\alpha + \frac{iK'}{2}) \sn(\alpha - \frac{iK'}{2})}  \\
	&= \frac{2 [\cn(1+k)x + i \sn(1+k)x \dn(1+k)\alpha] [\sn^2(1+k)\alpha \dn(1+k) x - \cn^2(1+k) \alpha]}{1 - \kappa^2 \sn^2(1+k)x \sn^2(1+k)\alpha} \\
	&= 2 [\cn x \dn x (1 + k \sn^2 \alpha) + i (1+k) \sn x (1 -  k \sn^2 \alpha)]  \\
    & \quad \times \frac{[(1+k)^2 \sn^2 \alpha (1-k \sn^2 x) - \cn^2 \alpha \dn^2 \alpha (1+k \sn^2 x)] }{[1 + k \sn^2 \alpha] [(1+k \sn^2 x)^2 (1 + k \sn^2 \alpha)^2 - 4 k (1+k)^2 \sn^2 x \sn^2 \alpha]}
	\end{align*}
Multiplying this formula by 
	$$
	-i [ (1+k) \sn x (1 - k \sn^2 \alpha) + i \cn x \dn x (1 + k \sn^2 \alpha) ] [1 + k \sn^2 \alpha]
	$$
	yields (\ref{to-prove}) if and only if the following identity holds:
	\begin{align*}
	& \quad
	(1+k)^2 \sn^2 x (1-k \sn^2 \alpha)^2 + \cn^2 x \dn^2 x (1 + k \sn^2 \alpha)^2 \\
	&= (1+k \sn^2 x)^2 (1 + k \sn^2 \alpha)^2 - 4 k (1+k)^2 \sn^2 x \sn^2 \alpha.
	\end{align*}
However, this identity is true in view of the fundamental relations 
(\ref{fund-rel}). Hence, the identity (\ref{to-prove}) has been proved.
\end{proof}

We can now provide the proof of Theorem \ref{theorem-1}. Since the parameter $\mathfrak{c}$ in (\ref{choice-c}) cancels in the quotient (\ref{DT}) and so is the common factor $\Theta^2(x)$ in the denominators of $p q$, $p^2$, and $q^2$,  we will not write these common factors and use the sign $\simeq$ for the equivalent expressions up to the division of these common factors. 

\begin{proof}[Proof of Theorem \ref{theorem-1}]
By using (\ref{p-expression}) and (\ref{choice-c}), we write
\begin{align*}
p^2 &\simeq e^{-2 \eta} e^{-\frac{i \pi x}{2K}} 	\frac{H^2\left( x+\alpha-\frac{i K'}{2} \right)}{\Theta^2\left( -\alpha + \frac{i K'}{2} \right) } + e^{2 \eta} e^{-\frac{i \pi x}{2K}}
\frac{H^2\left( x - \alpha-\frac{i K'}{2} \right)}{\Theta^2\left( \alpha + \frac{i K'}{2} \right) }\\
& \quad + 2 e^{-\frac{i \pi x}{2K}} \frac{H\left( x+\alpha-\frac{i K'}{2} \right) H\left( x - \alpha-\frac{i K'}{2} \right)}{\Theta\left( -\alpha + \frac{i K'}{2} \right) \Theta\left( \alpha + \frac{i K'}{2} \right)}. 
\end{align*}
By using (\ref{sn-theta-quotient}), (\ref{squared-rel-2}), (\ref{squared-rel-1}), (\ref{appendix-7}), and (\ref{squared-rel-3}), we obtain 
\begin{align*}
p^2 &\simeq e^{-2 \eta}
\frac{(1+k) \sn(x+\alpha) -i \cn(x+\alpha) \dn(x+\alpha)}{(1+k) \sn(\alpha) + i \cn(\alpha) \dn(\alpha)}  \frac{\Theta^2(x+\alpha)}{\Theta^2(\alpha)} \\
& \quad - e^{2 \eta}
\frac{(1+k) \sn(x-\alpha) -i \cn(x-\alpha) \dn(x-\alpha)}{(1+k) \sn(\alpha) - i \cn(\alpha) \dn(\alpha)}  \frac{\Theta^2(x-\alpha)}{\Theta^2(\alpha)} \\
& \quad - 2i  \frac{(1+k) \sn(x) [1 - k \sn^2(\alpha)] - i \cn(x) \dn(x) [1 + k \sn^2(\alpha)]}{1 + k \sn^2(\alpha)} \frac{\Theta^2(x)}{\Theta^2(0)}.
\end{align*}
Similarly, we obtain 
\begin{align*}
q^2 &\simeq e^{-2 \eta}
\frac{(1+k) \sn(x+\alpha) +i \cn(x+\alpha) \dn(x+\alpha)}{(1+k) \sn(\alpha) - i \cn(\alpha) \dn(\alpha)}  \frac{\Theta^2(x+\alpha)}{\Theta^2(\alpha)}\\
& \quad - e^{2 \eta}
\frac{(1+k) \sn(x-\alpha) +i \cn(x-\alpha) \dn(x-\alpha)}{(1+k) \sn(\alpha) + i \cn(\alpha) \dn(\alpha)}  \frac{\Theta^2(x-\alpha)}{\Theta^2(\alpha)} \\
& \quad + 2i \frac{(1+k) \sn(x) [1 - k \sn^2(\alpha)] + i \cn(x) \dn(x) [1 + k \sn^2(\alpha)]}{1 + k \sn^2(\alpha)} \frac{\Theta^2(x)}{\Theta^2(0)}.
\end{align*}
This yields a compact expression for the real-valued quantity in the transformation (\ref{DT}):
\begin{align*}
\frac{p^2 - q^2}{-2i(1+k)} &\simeq e^{-2 \eta}
\frac{\sn(x+\alpha) \cn(\alpha) \dn(\alpha) + \sn(\alpha) \cn(x+\alpha) \dn(x+\alpha)}{[1 + k \sn^2(\alpha)]^2}  \frac{\Theta^2(x+\alpha)}{\Theta^2(\alpha)}\\
& \quad + e^{2 \eta}
\frac{\sn(x-\alpha) \cn(\alpha) \dn(\alpha) - \sn(\alpha) \cn(x - \alpha) \dn(x-\alpha)}{[1 + k \sn^2(\alpha)]^2}  \frac{\Theta^2(x-\alpha)}{\Theta^2(\alpha)} \\
& \quad + 2 \sn(x) \frac{1 - k \sn^2(\alpha)}{1 + k \sn^2(\alpha)} \frac{\Theta^2(x)}{\Theta^2(0)}.
\end{align*}
To simplify the expression for $p q$, we write
\begin{align*}
p q &\simeq e^{-2 \eta} e^{-\frac{i \pi x}{2K}}
\frac{H\left( x+\alpha-\frac{i K'}{2} \right) \Theta\left( x+\alpha-\frac{i K'}{2} \right)}{H\left( -\alpha + \frac{i K'}{2} \right) \Theta\left( -\alpha + \frac{i K'}{2} \right) } \\
& \quad + e^{2 \eta} e^{-\frac{i \pi x}{2K}} \frac{H\left( x - \alpha-\frac{i K'}{2} \right) \Theta\left( x - \alpha-\frac{i K'}{2} \right)}{H\left( \alpha + \frac{i K'}{2} \right) \Theta\left( \alpha + \frac{i K'}{2} \right) }  \\
& \quad + e^{-\frac{i \pi x}{2K}} \frac{H\left( x+\alpha-\frac{i K'}{2} \right) \Theta\left( x - \alpha-\frac{i K'}{2} \right)}{H\left( \alpha + \frac{i K'}{2} \right) \Theta\left( -\alpha + \frac{i K'}{2} \right) }  \\
& \quad + e^{-\frac{i \pi x}{2K}} \frac{H\left( x - \alpha-\frac{i K'}{2} \right) \Theta\left( x+\alpha-\frac{i K'}{2} \right)}{H\left( -\alpha + \frac{i K'}{2} \right) \Theta\left( \alpha + \frac{i K'}{2} \right) } .
\end{align*}
Using (\ref{sn-theta-quotient}),  (\ref{squared-rel-2}), (\ref{squared-rel-1}), and (\ref{squared-rel-3}), we obtain
\begin{align*}
p q &\simeq -e^{-2 \eta}  \frac{1 + k \sn^2(x+\alpha)}{1+k \sn^2(\alpha)} 
\frac{\Theta^2\left(x+\alpha\right)}{\Theta^2\left( \alpha\right)} \\
& \quad -e^{2 \eta}  \frac{1 + k \sn^2(x-\alpha)}{1+k \sn^2(\alpha)} 
\frac{\Theta^2\left(x-\alpha\right)}{\Theta^2\left( \alpha\right)}\\
& \quad - i \left[ \frac{\sn(x + \alpha - \frac{iK'}{2})}{\sn(\alpha + \frac{iK'}{2})} + \frac{\sn(x - \alpha - \frac{iK'}{2})}{\sn(-\alpha + \frac{iK'}{2})} \right] \\
& \quad  \times  \frac{(1+k) \sn(x) [1 - k \sn^2(\alpha)] + i \cn(x) \dn(x) [1 + k \sn^2(\alpha)]}{1 + k \sn^2(\alpha)} \frac{\Theta^2\left(x\right)}{\Theta^2(0)}.
\end{align*}
By using (\ref{to-prove}), we rewrite $p q$ as follows:
\begin{align*}
p q &\simeq -e^{-2 \eta}  \frac{1 + k \sn^2(x+\alpha)}{1+k \sn^2(\alpha)} 
\frac{\Theta^2(x+\alpha)}{\Theta^2(\alpha)} \\
& \quad -e^{2 \eta}  \frac{1 + k \sn^2(x-\alpha)}{1+k \sn^2(\alpha)} 
\frac{\Theta^2(x-\alpha)}{\Theta^2(\alpha)}\\
& \quad + 2 \frac{(1+k)^2\sn^2(\alpha)(1-k\sn^2(x)) - \cn^2(\alpha) \dn^2(\alpha)(1+k\sn^2(x))}{(1 + k \sn^2(\alpha))^2} \frac{\Theta^2(x)}{\Theta^2(0)}.
\end{align*}
Next we substitute the simplified expressions for $p q$ and 
$(p^2-q^2)/(-2i(1+k))$ into (\ref{DT}) and bring it to the common denominator. The resulting formula is greatly simplified with the use of (\ref{appendix-extra}) and (\ref{A-expression-zero}) to the form 
$\hat{u} = -N/D$, where 
\begin{align*}
N(x) &= e^{-2 \eta}  [1 - k^2 \sn^2(\alpha) \sn^2(x+\alpha)] 
\frac{\Theta^2(x+\alpha)}{\Theta^2(\alpha)} \\
& \quad + e^{2 \eta}  [1 - k^2 \sn^2(\alpha) \sn^2(x-\alpha)] 
\frac{\Theta^2(x-\alpha)}{\Theta^2(\alpha)} \\
& \quad + 2 \frac{1 - k \sn^2(\alpha)}{1 + k \sn^2(\alpha)}
[	\cn^2(\alpha) \dn^2(\alpha) - (1+k)^2\sn^2(\alpha)]  \frac{\Theta^2(x)}{\Theta^2(0)}
\end{align*}
and 
\begin{align*}
D(x) &= e^{-2 \eta}  [1 - k^2 \sn^2(\alpha) \sn^2(x+\alpha)] 
\frac{\Theta^2(x+\alpha)}{\Theta^2(\alpha)} \sn(x+2\alpha) \\
& \quad + e^{2 \eta}  [1 - k^2 \sn^2(\alpha) \sn^2(x-\alpha)] 
\frac{\Theta^2(x-\alpha)}{\Theta^2(\alpha)} \sn(x-2\alpha) \\
& \quad + 2 (1 - k^2 \sn^4(\alpha)) \frac{\Theta^2(x)}{\Theta^2(0)} \sn(x).
\end{align*}
Multiplying both $N$ and $D$ by $\Theta^4(\alpha)$, we rewrite $\hat{u}$ in the form:
\begin{equation*}
\hat{u}(x) = - 
\frac{G(x+\alpha,\alpha) e^{-2\eta}  + G(x-\alpha,\alpha) e^{2\eta} + 2 \beta G(x,0)}{G(x+\alpha,\alpha) e^{-2\eta} \sn(x+2\alpha) + 
	G(x-\alpha,\alpha) e^{2\eta} \sn(x-2\alpha) + 2 \gamma G(x,0) \sn(x)},
\end{equation*}
where $G(x,\alpha)$ is given by 
\begin{align}
\label{G-function}
G(x,\alpha) := \left[ 1 - k^2 \sn^2(x) \sn^2(\alpha) \right] \Theta^2(x)\Theta^2(\alpha)
\end{align}
and the $x$-independent parameters $\beta$ and $\gamma$ are given by 
\begin{align*}
\beta &:= \frac{1-k \sn^2(\alpha)}{1+k \sn^2(\alpha)}  [\cn^2(\alpha) \dn^2(\alpha) - (1+k)^2 \sn^2(\alpha)] \frac{\Theta^4(\alpha)}{\Theta^4(0)}, \\
\gamma &:= \left[ 1 - k^2 \sn^4(\alpha) \right] \frac{\Theta^4(\alpha)}{\Theta^4(0)}.
\end{align*}
The expresion for $\hat{u}(x)$ reduces to the one-mode solutions (\ref{single-mode}) as $\eta \to \pm \infty$, which is non-singular after the half-period translation along the imaginary axis: 
$$
\hat{u}(x) \to \hat{u}(x+iK').
$$
To perform the same half-period translation of $\hat{u}(x)$ for the two-mode solution, we use the translation formulas (\ref{appendix-8}) in $G(x,\alpha)$ rewritten as 
$$
G(x,\alpha) = \Theta^2(x)\Theta^2(\alpha) - H^2(x)H^2(\alpha).
$$ 
After canceling the numerical and $x$-dependent factors in the quotient for $\hat{u}(x + i K')$ and the complex phase in $\eta$ by a suitable choice of $x_0$, 
the expression for $\hat{u}(x+iK')$ is transformed to the same form as for $\hat{u}(x)$ with  $G(x \pm \alpha+iK',\alpha)$ being replaced by $\hat{G}(x \pm \alpha,\alpha)$. Simplifying $\hat{G}(x + \alpha,\alpha)$ yields
\begin{align*}
\hat{G}(x+\alpha,\alpha) &= H^2(x+\alpha)\Theta^2(\alpha) - \Theta^2(x+\alpha) H^2(\alpha)\\
&= k \Theta^2(x+\alpha)\Theta^2(\alpha) [\sn^2(x+\alpha)-\sn^2(\alpha)] \\
&= k \Theta^2(x+\alpha) \Theta^2(\alpha) \sn(x) [\sn(\alpha) \cn(x+\alpha) \dn(x+\alpha) + \sn(x+\alpha) \cn(\alpha) \dn(\alpha)] \\
&= k \Theta^2(x+\alpha) \Theta^2(\alpha) \sn(x) \sn(x+2\alpha) [1 - k^2 \sn^2(\alpha) \sn^2(x+\alpha)] \\
&= k \sn(x) \sn(x+2\alpha) G(x+\alpha,\alpha),
\end{align*}
where we have used again (\ref{A-expression-zero}) and where $G(x,\alpha)$ is given by (\ref{G-function}). Substituting $\hat{G}(x\pm\alpha,\alpha)$, and $\hat{G}(x,0)$ into $\hat{u}(x + i K')$ and canceling one power of $\sn(x)$ yields the formula 
\begin{align*}
& \hat{u}(x+iK') \\
& = -k 
\frac{G(x+\alpha,\alpha) e^{-2\eta} \sn(x+2\alpha) + G(x-\alpha,\alpha) e^{2\eta} \sn(x-2\alpha) + 2 \beta G(x,0) \sn(x)}{G(x+\alpha,\alpha) e^{-2\eta} + G(x-\alpha,\alpha) e^{2\eta}  + 2 \gamma G(x,0)}.
\end{align*}
It follows from (\ref{appendix-7}) that 
$$
G(x,\alpha) = \Theta^2(0) \Theta(x+\alpha) \Theta(x-\alpha).
$$
Canceling $\Theta^2(0) \Theta(x)$ yields 
\begin{align*}
\hat{u}(x+iK') = -k 
\frac{\Theta(x+2\alpha) \sn(x+2\alpha) e^{-2\eta} + 
	\Theta(x-2\alpha) \sn(x-2\alpha) e^{2\eta} + 2 \beta \Theta(x) \sn(x)}{\Theta(x+2\alpha) e^{-2\eta} + \Theta(x-2\alpha) e^{2\eta}  + 2 \gamma \Theta(x)}.
\end{align*}
Similarly, we simplify the constants $\beta$ and $\gamma$:
$$
\gamma = \frac{G(\alpha,\alpha)}{\Theta^4(0)} = \frac{\Theta(2\alpha)}{\Theta(0)}
$$
and
\begin{align*}
\beta &= \frac{1-k \sn^2(\alpha)}{1+k \sn^2(\alpha)}  [(1+k \sn^2(\alpha))^2 - 2 (1+k)^2 \sn^2(\alpha)] \frac{\Theta^4(\alpha)}{\Theta^4(0)} \\
&= \left[ 1 - \frac{2 (1+k)^2 \sn^2(\alpha)}{(1 + k \sn^2(\alpha))^2} \right] \frac{\Theta(2\alpha)}{\Theta(0)},
\end{align*}
where we have used (\ref{appendix-extra}). These expressions coincide with (\ref{beta}). Finally, using (\ref{sn-theta-quotient}) and canceling the negative sign in $\hat{u}(x+iK')$ yields the two-mode solution in the final form (\ref{new-solution}), where we recall that $x$ is replaced by $\xi = x + c_0 t$ and $\eta = -s (x + c_0 t +x_0) - \omega t$ is expressed explicitly from (\ref{s-expression}) and (\ref{omega-expression}).
\end{proof}

We finish by giving the proof of Corollary \ref{theorem-2}. 

\begin{proof}[Proof of Corollary \ref{theorem-2}]
We take the limit $k \to 1$ with the help of the limiting relations:
\begin{equation*}
		k = 1: \quad \sn(x) = \tanh(x), \quad \cn(x) = {\rm sech}(x), \quad 
		\dn(x) = {\rm sech}(x)
\end{equation*}
and 
\begin{align*}
k = 1: \quad \beta = [1 - \sinh^2(2\alpha)] {\rm sech}(2\alpha), \quad \gamma = \cosh(2 \alpha).
\end{align*}
The asymptotic behavior $\Theta(x) \simeq \cosh(x)$ with some $k$-dependent numerical factor has been clarified in \cite[Eq. (36)]{HMP}. It follows from (\ref{sn-theta-quotient}) that $H(x) \simeq \sinh(x)$ with the same $k$-dependent factor. Taking the limit $k \to 1$ in (\ref{new-solution}) and cancelling the $k$-dependent numerical factor in the quotient yields the expression (\ref{two-solitons}). We also obtain:
$$
k = 1 : \quad c_0 = 2, \quad 
\kappa = \tanh(\alpha) + \tanh(\alpha) {\rm sech}(2\alpha) = \tanh(2\alpha), 
$$
and $c = 2 + 4 \; {\rm sech}^2(2\alpha)$.
\end{proof}

\appendix
\section{Relations between Jacobi's elliptic functions}
\label{sec-app}

Jacobi's elliptic functions satisfy the translation properties \cite[(2.2.17)--(2.2.19)]{Lawden}:
\begin{equation}
\label{appendix-4}
\sn(x+K) = \frac{\cn(x)}{\dn(x)}, \quad 
\cn(x+K) = \frac{-i k' \sn(x)}{\dn(x)}, \quad 
\dn(x+K) = \frac{k'}{\dn(x)}
\end{equation} 
and
\begin{equation}
\label{appendix-2}
\sn(x+iK') = \frac{1}{k \sn(x)}, \quad 
\cn(x+iK') = \frac{-i \dn(x)}{k \sn(x)}, \quad 
\dn(x+iK') = \frac{-i \cn(x)}{\sn(x)},
\end{equation}
as well as the reflection formulas \cite[(2.6.12)]{Lawden} with $k' = \sqrt{1-k^2}$:
\begin{equation}
\label{appendix-1}
\sn(ix;k) = \frac{i \sn(x;k')}{\cn(x;k')}, \quad 
\cn(ix;k) = \frac{1}{\cn(x;k')}, \quad 
\dn(ix;k) = \frac{\dn(x;k')}{\cn(x;k')}.
\end{equation}
Similarly, Jacobi's theta functions satisfy the translation properties 
\cite[(1.3.6) and (1.3.9)]{Lawden}:
\begin{equation}
\label{appendix-8}
H(x + i K') = i e^{\frac{\pi K'}{4K}} e^{-\frac{i \pi x}{2K}} \Theta(x), \quad
\Theta(x + i K') = i e^{\frac{\pi K'}{4K}} e^{-\frac{i \pi x}{2K}} H(x),
\end{equation}
The addition formulas for elliptic functions \cite[(2.4.1)--(2.4.3)]{Lawden} are given by 
\begin{equation}
\begin{array}{l}
\sn(u\pm v) = \displaystyle \frac{\sn(u)\cn(v)\dn(v) \pm \sn(v)\cn(u)\dn(u)}{1-k^2\sn^2(u)\sn^2(v)}, \\
\cn(u\pm v) = \displaystyle \frac{\cn(u)\cn(v) \mp \sn(u)\sn(v)\dn(u) \dn(v)}{1-k^2\sn^2(u)\sn^2(v)}, \\
\dn(u\pm v) = \displaystyle \frac{\dn(u)\dn(v)\mp k^2\sn(u)\sn(v)\cn(u)\cn(v)}{1-k^2\sn^2(u)\sn^2(v)},
\end{array} 
\label{appendix-5}
\end{equation}
from which we obtain the following translation formulas:
\begin{equation}
\begin{array}{l}
\sn\left(x + \frac{iK'}{2}\right) = \displaystyle \frac{1}{\sqrt{k}} \frac{(1+k) \sn(x) + i \cn(x) \dn(x)}{1 + k \sn^2(x)}, \\
\cn\left(x + \frac{iK'}{2}\right) = \displaystyle \frac{\sqrt{1+k}}{\sqrt{k}} \frac{\cn(x) - i \sn(x) \dn(x)}{1 + k \sn^2(x)}, \\
\dn\left(x + \frac{iK'}{2}\right) = \displaystyle \sqrt{1+k} \frac{\dn(x) - i k \sn(x) \cn(x)}{1 + k \sn^2(x)}.
\end{array} 
\label{appendix-quarter-period}
\end{equation}
Jacobi's zeta function satisfies the following addition formula \cite[(3.6.2)]{Lawden}:
\begin{equation} 
Z(u \pm v) = Z(u) \pm Z(v) \mp k^2\sn(u)\sn(v)\sn(u \pm v).
\label{appendix-6}
\end{equation} 

\vspace{0.2cm}

{\bf Conflict of Interest:} The authors declare that they have no conflict of interest.

\vspace{0.2cm}

{\bf Data availability statement:} The data is available upon request.

\end{document}